\documentclass[a4paper]{article}
\usepackage{graphicx,amssymb,amsmath,xcolor,enumerate}
\usepackage[lined,boxed,linesnumbered,noend,commentsnumbered,inoutnumbered]{algorithm2e}
\usepackage{url}

\newcommand\area{\mathsf{area}}
\newcommand\puz{\mathcal{P}}

\newtheorem{theorem}{Theorem}
\newtheorem{lemma}[theorem]{Lemma}
\newtheorem{conjecture}[theorem]{Conjecture}
\newtheorem{obs}{Observation}

\newenvironment{proof}{\emph{Proof.}}{\hfill $\Box$\\}

\newcommand{\theoremname}{Theorem}

\SetKwData{True}{TRUE}
\SetKwData{False}{FALSE}
\SetKwFunction{hasSA}{hasAssembly}
\SetKwFunction{three}{hasAssemblyCase3}
\SetKwFunction{issym}{isSym}
\SetKwFunction{join}{join}
\SetKwFunction{cong}{cong}
\SetKwProg{myfun}{Function}{}{}
\SetKwInOut{Input}{input}\SetKwInOut{Output}{output}

\title{Symmetric Assembly Puzzles are Hard, \\Beyond a Few Pieces}

\author{Erik D. Demaine\footnote{MIT, {\tt \{edemaine,jasonku\}@mit.edu}}
\and Matias Korman\footnote{Tufts University, {\tt matias.korman@tufts.edu}}
\and Jason S. Ku\footnotemark[1]
\and Joseph S. B. Mitchell\footnote{Stony Brook University, {\tt joseph.mitchell@stonybrook.edu}}
\and Yota Otachi\footnote{Kumamoto University, {\tt otachi@cs.kumamoto-u.ac.jp}}
\and Andr\'e van Renssen\footnote{University of Sydney, {\tt andre.vanrenssen@sydney.edu.au}}
\and Marcel Roeloffzen\footnote{TU Eindhoven, {\tt m.j.m.roeloffzen@tue.nl}}
\and Ryuhei Uehara\footnote{JAIST, {\tt uehara@jaist.ac.jp}}
\and Yushi Uno\footnote{Osaka Prefecture University, {\tt uno@cs.osakafu-u.ac.jp}}
}

\date{}


\begin{document}

\maketitle

\begin{abstract}
We study the complexity of symmetric assembly puzzles: given a collection of
simple polygons, can we translate, rotate, and possibly flip them so that their
interior-disjoint union is line symmetric? On the negative side, we show that
the problem is strongly NP-complete even if the pieces are all polyominos. On
the positive side, we show that the problem can be solved in polynomial time if
the number of pieces is a fixed constant.
\end{abstract}

\section{Introduction}

The goal of a 2D \emph{assembly puzzle} is to arrange a given set of pieces so 
that they do not overlap and form a target silhouette. The most famous
example is the Tangram puzzle, shown in \figurename~\ref{fig:tangram}. Its
earliest printed reference is from 1813 in China, but by whom or exactly when it
was invented remains a mystery~\cite{Slocum2004}. There are over 2,000 Tangram
assembly puzzles~\cite{Slocum2004}, and many more similar 2D assembly
puzzles~\cite{FoxEpsteinUehara2014}. A recent trend in the puzzle world is a
relatively new type of 2D assembly puzzle that we call \emph{symmetric assembly
puzzles}. In these puzzles the target shape is not specified. Instead, the
objective is to arrange the pieces so that they form a symmetric silhouette
without overlap.

\begin{figure}[htb]
  \centering
  \includegraphics[width=0.45\linewidth]{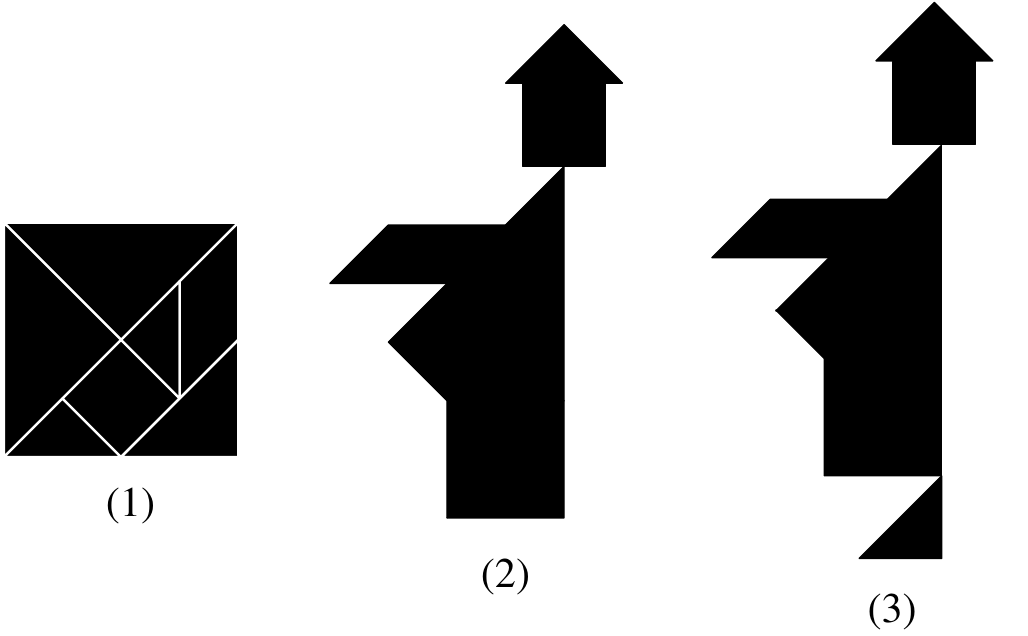}\qquad
  \includegraphics[width=0.45\linewidth]{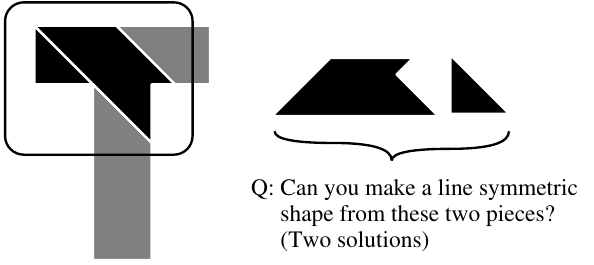}
  \caption{[Left] The seven Tangram pieces (1) can be assembled into non-simple silhouettes (2) and (3). 
  [Right] A symmetric assembly puzzle invented by Hiroshi Yamamoto~\cite{Yamamoto2014}: 
  given the two black pieces (right) from the classic T puzzle (left), 
  make two different line symmetric shapes.  (Used with permission.)}
 \label{fig:tangram}
\end{figure}

The first symmetric assembly puzzle, ``Symmetrix'', was designed in 2003 by
Japanese puzzle designer Tadao Kitazawa and was distributed by Naoyuki Iwase
as his exchange
puzzle at the 2004 International Puzzle Party (IPP) in Tokyo~\cite{IPP2004}.
The lack of a specified target shape makes these puzzles quite
difficult to solve.
In this paper, we aim for arrangements that are line symmetric
(reflection through a line),
but other symmetries such as rotational symmetry could also be considered.
We also assume that the given pieces are simple polygons, and that
the line-symmetric assembly must be a simple polygon (have no holes).

We study the computational complexity of this general form of
symmetric assembly puzzle.
Precisely, we define a \emph{symmetric assembly puzzle}, or SAP, to be a set
of $k$ disjoint simple polygons $\puz = \{P_1,P_2,\ldots,P_k\}$, with
$n=|P_1|+\cdots+|P_k|$ the total number of vertices in all pieces. By
\emph{simple polygon} we mean a closed subset of $\mathbb{R}^2$ homeomorphic to
a disk bounded by a closed path of a finite number of straight line segments
where nonadjacent edges and vertices do not intersect. A \emph{symmetric
assembly} $f: \{ p \in P \mid P \in \puz \} \rightarrow \mathbb{R}^2$, of a SAP
$\puz$ is a planar isometric embedding of the pieces ($\{ f(p) \mid p \in P \}$
for each $P \in \puz$ is a rigid transformation of $P$), such that their mapped
interiors are disjoint and their mapped union forms a simple polygon that is
line symmetric.
We abuse notation slightly by using $f(P)$ to denote $\{ f(p) \mid p \in P \}$
and $f(\puz)$ to denote $\{ f(p) \mid p \in P,~ P \in \puz \}$. 
We refer to SAP (symmetric assembly puzzles) as the problem of deciding whether
an instance $\puz$ has a symmetric assembly $f$, and we study the computational
complexity of SAP.
We allow pieces to flip over (reflect), but other variants of the puzzle may
disallow this. Given that humans have difficulty solving SAPs with even a few
low-complexity pieces, we consider two different generalizations: bounded piece
complexity ($|P_i| = O(1)$) and bounded piece number ($k = O(1)$).  In the
former case, we prove strong NP-completeness, while in the latter case, we solve
the problem in polynomial time (the exponent is linear in~$k$).

\section{Many Pieces}

First, we show that it is hard to solve symmetric assembly puzzles with a large number of pieces,
even if each piece has bounded complexity ($|P_i|=O(1)$).

\begin{theorem}
\label{th:hard}
Symmetric assembly puzzles are strongly NP-complete even if each piece is a
polyomino with at most six vertices with area upper-bounded by a
polynomial function of the number of pieces.
\end{theorem}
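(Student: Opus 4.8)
The plan is to prove both that the problem lies in NP and that it is strongly NP-hard, the latter by a reduction from \textsc{3-Partition} (strongly NP-complete, with all integers polynomially bounded in the instance size).

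For membership in NP, the certificate is a symmetric assembly of polynomially bounded bit-complexity, and the point is that one always exists. In any symmetric assembly the union is a simple polygon, hence ``thick'' everywhere, so no two pieces can meet at an isolated point; a short argument then shows that the adjacency graph on the pieces — an edge whenever two pieces share a boundary segment of positive length — is connected, since otherwise the union would be pinched at a point and fail to be homeomorphic to a disk. Because every edge of a polyomino is axis-parallel in that piece's own frame, two pieces sharing a positive-length segment have orientations differing by a multiple of $\pi/2$ (possibly with a reflection); by connectivity, a single global rotation makes every piece axis-aligned. The placement is then governed by a linear system of contact and collinearity constraints whose coefficients are $\pm 1$ or piece side lengths (polynomially bounded, since the areas are), so a vertex solution has polynomial bit-complexity; interior-disjointness, simplicity of the union, and line symmetry are all checkable in polynomial time.

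For hardness, start from a \textsc{3-Partition} instance: integers $a_1,\dots,a_{3m}$ with $B/4 < a_i < B/2$ and $\sum_i a_i = mB$, asking for a partition into $m$ triples of sum $B$. Scale all lengths by a factor $N = \Theta(m)$, so that macroscopic gadget features cannot be confused with the ($N$-times larger) numbers while keeping all areas polynomial in $m$; write $a_i' = Na_i$ and $B' = NB$. The SAP consists of (i) $3m$ \emph{bar} pieces, axis-aligned $a_i' \times N$ rectangles (four vertices each), and (ii) a \emph{frame}: a set of $O(m)$ rectangular and L-shaped polyominoes, each with at most six vertices, designed so that in every symmetric assembly the frame pieces must form a single ``comb'' — a large rectangle with $m$ congruent $B' \times N$ rectangular pockets removed along one side. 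The forcing mechanism is the reflex corners: a line-symmetric simple polygon has its reflex vertices in mirror pairs (or on the axis), and the frame is built so these can be eliminated only when each of the $m$ pockets is completely covered, which forces the comb to be completed into a solid rectangle (line-symmetric by construction). If the \textsc{3-Partition} instance is a yes-instance, placing the triple of group $j$ into pocket $j$ fills it exactly (widths sum to $B'$, heights equal $N$) and yields the solid rectangle. Conversely, any symmetric assembly fills every pocket with bars; since each $a_i'$ lies strictly between $B'/4$ and $B'/2$, a pocket of length $B'$ tiled by height-$N$ bars holds exactly three, and reading off the pockets yields a valid \textsc{3-Partition}. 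Finally, the number of pieces is $O(m)$ and every area is $\mathrm{poly}(m)$, matching the theorem's bounds.

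The main obstacle is the frame's \emph{rigidity}: showing that no rearrangement of the frame pieces — possibly intermixed with the bars, and using rotations and reflections — can be completed to a different simple, line-symmetric polygon. This requires a careful design of the interlocking tabs and slots on the six-vertex pieces together with a case analysis establishing that (a) the total area of the frame pieces equals that of the comb, so its ``shell'' is the unique way they fit inside the target rectangle; (b) any contact between two frame pieces must realize an intended tab--slot match, pinning down the comb's adjacency structure; and (c) the bars are too wide to stack in two layers within a pocket and too narrow to bridge across the spine, so they can only occupy the pockets. A secondary technical point is making the ``asymmetric unless every pocket is filled'' property robust against partially filled, overfilled, and shifted configurations — controlling parities and the scale factor $N$ is what makes this go through.
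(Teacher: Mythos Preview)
Your proposal outlines a plausible-looking reduction from \textsc{3-Partition}, but the central step is not actually carried out. You correctly identify the ``main obstacle'' as proving that your $O(m)$ frame pieces must assemble into the intended comb and nothing else, and then you describe the \emph{shape} of the argument (tab--slot matching, area accounting, parity and scale control) without giving the design or the case analysis. That is exactly the hard part: with many small frame pieces that may be rotated, reflected, and interleaved with the bars, ruling out every alternative line-symmetric simple polygon is a substantial combinatorial undertaking, and nothing in your write-up pins it down. As written, the hardness direction is a proof plan, not a proof.

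The paper sidesteps this difficulty entirely by a different reduction. Instead of \textsc{3-Partition} with a multi-piece comb, it reduces from \textsc{Rectangle Packing} and uses a \emph{single} six-vertex frame piece $F$: a $W\times H$ rectangle with one $w\times h$ corner notch (an L-shape). With one dominant piece, rigidity follows from a short area argument: any line of symmetry $\ell$ must intersect $F$ in a region of area at least $WH-2wh$, which forces $\ell$ to be parallel or orthogonal to the sides of $F$ (the $\pi/4$ case is excluded because $F\cap F^\ell$ would fit in an $H\times H$ box of area $9w^2<10w^2$), and then to pass through the center of $F$. Hence the only symmetric assembly is the full $W\times H$ rectangle, and the remaining pieces must exactly pack the $w\times h$ hole. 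This is precisely the step your construction lacks a substitute for: a single large piece gives you rigidity almost for free, whereas your $O(m)$-piece frame demands a rigidity proof you have not supplied.

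Your NP-membership argument, incidentally, is more careful than the paper's (which simply asserts a polynomial-size certificate); your observation that shared positive-length boundary forces the polyomino grids to align modulo $\pi/2$, together with connectivity of the adjacency graph, is the right way to bound the bit-complexity of a witness in the polyomino case.
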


\begin{proof}
If a SAP has a solution, the location and orientation
of each piece within a symmetric assembly is a solution
certificate of polynomial size checkable in polynomial time,
so symmetric assembly puzzles are in NP.
We reduce from the \textsc{Rectangle Packing Puzzle} problem (in short the RPP problem), 
known to be strongly NP-hard~\cite{DemaineD07}. Specifically,
it is (strongly) NP-complete to decide whether $k$ given rectangular 
pieces---sized $1 \times x_{1}, 1 \times x_{2}, \dots, 1 \times x_{k}$,
where the $x_{i}$'s are positive integers bounded above by a 
polynomial in $k$---can be exactly packed into a specified rectangular box 
with given width $w$ and height $h$ and area $x_{1} + x_{2} + \dots + x_{k}=wh$.

\begin{figure}[htb]
  \centering
  \includegraphics[width=0.4\linewidth]{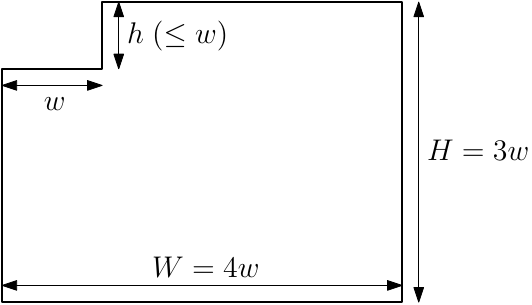}
  \includegraphics[width=0.3\linewidth]{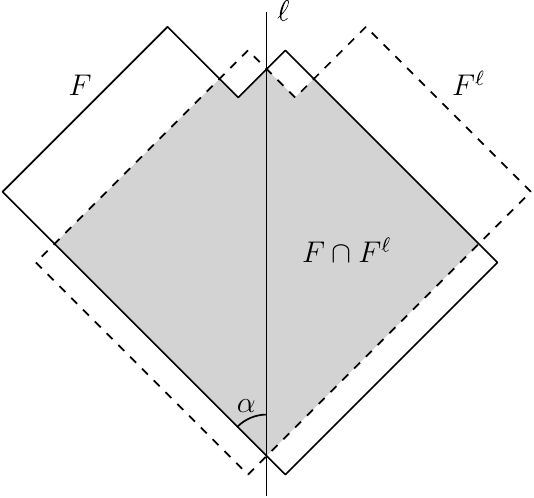}
  \includegraphics[width=0.2\linewidth]{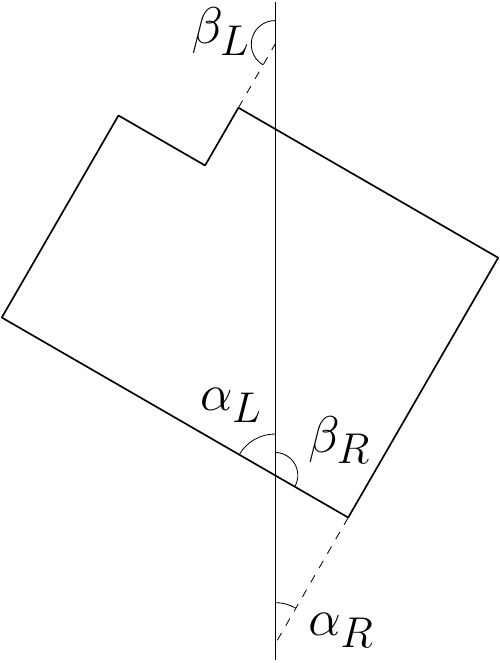}
  \caption{[Left] The frame piece $F$. 
  [Middle] If $\ell$ and $\ell_B$ form an angle of $\pi/4$, then $F \cap F^{\ell}$ is contained 
  in a rectangle in an $H \times H$ and thus $O^*$ cannot be line symmetric.
  [Right] The angles $\alpha_{L}$, $\beta_{L}$, $\alpha_{R}$, and $\beta_{R}$.}
  \label{fig:frame}
\end{figure}

Let $I = (x_{1}, \dots, x_{k}, w, h)$ be a rectangle packing puzzle.
Without loss of generality, we assume that $w \ge h$.
Now let $I' = (P_{1}, \dots, P_{k}, F)$ be the SAP where
$P_{i}$ is the $1 \times x_{i}$ rectangle for each $i\in\{1,\ldots,k\}$,
and $F$ is the polyomino in \figurename~\ref{fig:frame}. We call $F$ the {\em frame piece} of $I'$. 
We show that $I$ has a rectangle packing if and only if
$I'$ has a symmetric assembly.

Clearly, if $I$ has a rectangle packing, then the pieces $P_{1}, \dots, P_{k}$
can be packed into the $w \times h$ hole in the frame piece
creating a line symmetric $W \times H$ rectangle, solving the
SAP. Now we show the reverse implication. Assume that $I'$
has a symmetric assembly, and let $O^*$ be a line symmetric polygon formed by
the pieces $\{P_{1}, \dots, P_{k}, F\}$. We claim that $O^*$ must be a $W\times
H$ rectangle, which will imply that $I$ is a yes-instance of RPP.  Fix a
placement of the pieces of $I'$ that forms $O^*$, and let $\ell$ be one of its
lines of symmetry. Assume, without loss of generality, that $\ell$ is a vertical
line. Let $F^{\ell}$ be the reflection of $F$ about $\ell$. 

\begin{obs}
  \label{obs:intersection}
$\area(F \cap F^{\ell}) \ge W H - 2 w h \ge 10w^{2}$
\end{obs}
\begin{proof}
Because $O^*$ contains $F^{\ell}$ and $F$,
it holds that $\area(F^{\ell} \setminus F) \le \area(O^* \setminus F) = w h$.
Because $F \cup F^{\ell}$ is mirror-symmetric, 
$\area(F^{\ell} \setminus F) = \area(F \setminus F^{\ell})$.
Hence, it follows that $\area(F \cap F^{\ell}) = \area(F) - \area(F\setminus F^{\ell}) \ge W H - 2 w h \ge 10w^{2}$.
\end{proof}

Observation~\ref{obs:intersection} implies that $\ell$ passes through an interior point of $F$.  
Let $\ell_B$ be the line containing the segment of $F$ with length $4w$. 
Let $c$ be the center of the frame piece's bounding box.

\begin{lemma}
  \label{lem:alpha1}
  $\ell_B$ is either parallel to $\ell$ or orthogonal to $\ell$.
 \end{lemma}

\begin{proof}
Suppose, for contradiction, that $\ell_B$ is neither parallel nor orthogonal to
$\ell$. Let $\alpha$ be the smaller angle made by $\ell_B$ and $\ell$.\@ We
partition the edges of $F$ crossed by $\ell$ into two at their intersection
points. Let $F_{L}$ and $F_{R}$ be the sets of segments on the left and right
portions of $F$, respectively.
Consider the set of counter-clockwise angles between $\ell$ and the lines
containing segments of $F_{L}$. The assumptions that $\ell_B$ and $\ell$ are
neither parallel nor orthogonal, and that $F$ is a polyomino together imply that
the set contains exactly two angles $\alpha_{L}$ and $\beta_{L}$, where
$\alpha_{L} \le \beta_{L}$ and $\alpha_{L} + \pi/2 = \beta_{L}$.
Similarly, let $\alpha_{R}$ and $\beta_{R}$ be the clockwise angles between
$\ell$ and the lines containing segments of $F_{R}$, where $\alpha_{R} \le
\beta_{R}$ and $\alpha_{R} + \pi/2 = \beta_{R}$. Because $\alpha_{L} + \beta_{R} =
\pi$, it holds that $\alpha_{L} + \alpha_{R} = \pi/2$. Note that $\alpha \in
\{\alpha_{L}, \alpha_{R}\}$.

Two distinct pieces of $I'$ are \emph{connected} if the fixed placement of
the two pieces to form $O^*$ have a non-degenerate line segment on their edges in
common. Let $\puz$ be the subset of $\{P_{1}, \dots, P_{k}, F\}$ such that each
$P_{i} \in \puz$ can be reached from $F$ by repeatedly following connected pieces
in $O^*$.

As before, consider the angles formed by $\ell$ and the lines containing segments in the left and right parts of $\puz$.
Because all pieces are polyominoes, these lines cannot make angles 
other than $\alpha_{L}$, $\beta_{L}$, $\alpha_{R}$, and $\beta_{R}$ with $\ell$.
Further note that the subset $O'$ of $O^*$ covered by $\puz$ must be mirror-symmetric with respect to $\ell$, 
or else $O^*$ would not be.
This implies that $\alpha_{L} = \alpha_{R}$.
Because $\alpha_{L} + \alpha_{R} = \pi/2$, the only solution in which $\ell$ is not parallel or orthogonal to $\ell_B$ is when $\alpha_{L} = \alpha_{R} = \pi/4$ and $\alpha = \pi/4$.
However, if $\alpha = \pi/4$, then $F \cap F^{\ell}$ is a subset of an 
$H \times H$ rectangle (see \figurename~\ref{fig:frame}),
whose area is at most $H^{2} = 9w^{2}$, contradicting Observation~\ref{obs:intersection}.
\end{proof}

\begin{figure}
  \centering
  \includegraphics[width=0.55\linewidth]{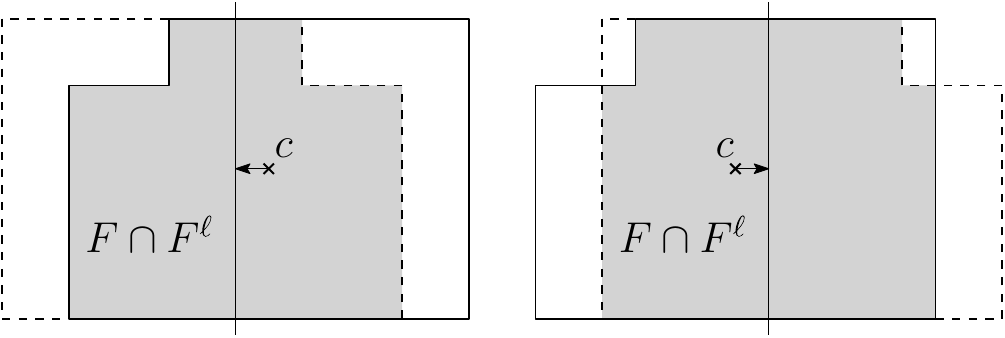}\qquad
  \includegraphics[width=0.35\linewidth]{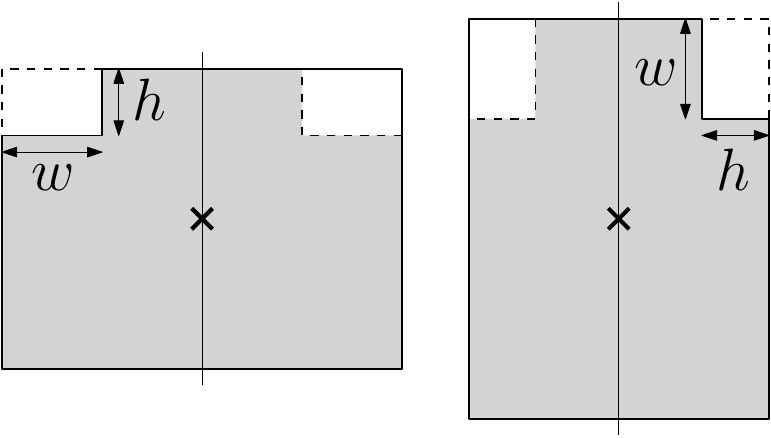}
  \caption{[Left] When $\ell$ passes to the left of $c$, the portion of $F$ to the left of $\ell$ is too small.
  If it passes to the right, the right portion would be too small. 
  [Right] If $\ell$ passes through $c$, and is either orthogonal or 
  parallel to $\ell_B$, the symmetric assembly puzzle 
  can only be completed into a rectangle.}
  \label{fig:center}
\end{figure}

So $\ell$ is either parallel or orthogonal to $\ell_B$. Further, it passes
through $c$ (see \figurename~\ref{fig:center}). In either case, $F \cup F^{\ell}$
is a $W \times H$ rectangle, and thus $O^* = F \cup F^{\ell}$. This implies that
$O^* \setminus F$ is a $w \times h$ rectangle that must contain the remaining
pieces of $I'$. In particular, we have that this placement 
packing of $P_{1}, \dots, P_{k}$ gives a solution to the instance $I$ of RPP,
completing the proof of \theoremname~\ref{th:hard}.
\end{proof}

We extend the above proof to show that the problem remains strongly NP-complete even when each piece is a convex quadrilateral. 

\begin{theorem}
  \label{theo:ConvexNPComplete}
  Symmetric assembly puzzles are strongly NP-complete even if each piece is a convex quadrilateral and area upper bounded by a polynomial function of the number of pieces. 
\end{theorem}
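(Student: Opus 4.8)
The plan is to keep the reduction from \textsc{Rectangle Packing Puzzle} almost unchanged: the rectangular pieces $P_{1},\dots,P_{k}$ are already convex quadrilaterals, so only the single non-convex frame piece $F$ needs to be replaced. I would replace $F$ by a collection $\mathcal{F}=\{F_{1},\dots,F_{m}\}$ of $m=O(1)$ convex quadrilaterals that together behave as a \emph{rigid frame}: the $F_{j}$'s can be reassembled into exactly the old C-shaped frame $F$ (so that the easy direction of the equivalence still goes through), and, conversely, in any symmetric assembly the $F_{j}$'s together with the $P_{i}$'s are forced to form the same $W\times H$ rectangle with a vacant $w\times h$ sub-rectangle for the $P_{i}$'s. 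Since all new pieces have area bounded by a polynomial in $k$, and membership in NP is argued exactly as in \theoremname~\ref{th:hard}, the whole issue is establishing the rigidity of $\mathcal{F}$.

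To obtain the $F_{j}$'s I would slice $F$ along a constant number of segments that resolve its two reflex vertices, choosing the cuts to have a small rational slope $s\notin\{0,+1,-1\}$; then every $F_{j}$ is a trapezoid with polynomially bounded rational coordinates, and its slanted edge can be placed flush against another edge, without overlap, essentially only against the slanted edge of its intended neighbour (the slopes a slanted edge can attain after rotations by multiples of $\pi/2$ and reflections are $s,-s,1/s,-1/s$, which are four distinct values precisely because $s\notin\{0,+1,-1\}$, so in particular the slant angle is not a multiple of $\pi/4$). I would additionally arrange --- by choosing the three thicknesses of the frame to be distinct and rescaling, which keeps all coordinates and areas polynomial --- that the axis-parallel edges of the $F_{j}$'s have pairwise distinct lengths, so that a flush axis-parallel contact can occur only between pieces whose contact edges were meant to meet.

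For correctness, the easy direction is the intended assembly: a rectangle packing of $I$ goes into the $w\times h$ hole, the $F_{j}$'s wrap around it to re-form $F$, and the result is a line-symmetric $W\times H$ rectangle. For the converse, suppose $\{P_{1},\dots,P_{k}\}\cup\mathcal{F}$ has a symmetric assembly $O^{*}$ with line of symmetry $\ell$. An area-bookkeeping step gives $\sum_{j}\area(F_{j})=\area(F)=WH-wh$ and $\sum_{i}\area(P_{i})=wh$, so $\area(O^{*})\ge WH-wh$, the pieces pack with no slack, and the region of $O^{*}$ covered by the $P_{i}$'s has area exactly $wh$. Next I would reprove the analogues of Observation~\ref{obs:intersection} and Lemma~\ref{lem:alpha1}: reflecting the longest axis-parallel edges of the $F_{j}$'s across $\ell$, the fact that these edges have total length comparable to $W$ while their slant-angle counterparts cannot, forces $\ell$ to be parallel or orthogonal to them, and a centering argument as in \figurename~\ref{fig:center} then forces $\ell$ through the center of the assembled frame. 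Finally, a local analysis of which edges of the $F_{j}$'s can be mutually flush --- using the distinct-length and generic-slope conditions, the fact that the piece-angles around any boundary vertex sum to at most $2\pi$, and the exact area budget --- shows that the $F_{j}$'s must reassemble into a single C-shaped region congruent to $F$; hence $O^{*}$ is the $W\times H$ rectangle, $O^{*}\setminus F$ is a $w\times h$ rectangle, and the $P_{i}$'s packed inside it give a packing for $I$.

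The main obstacle is the last step, the local edge-matching analysis. Allowing pieces to flip roughly doubles the case count (a trapezoid and its mirror image are in general not congruent, so both placements must be checked), and a priori the convex quadrilaterals could fit together into symmetric shapes other than a rectangle; the generic-slope choice, the distinct-length choice, and the exact-area constraint are precisely what is needed to collapse all of these to the intended configuration, but carrying out that finite --- though somewhat tedious --- case check on vertex-to-vertex and edge-to-edge contacts is where the real work lies. Everything else is a routine adaptation of the proof of \theoremname~\ref{th:hard}.
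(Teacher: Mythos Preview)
Your high-level plan --- leave the $P_i$'s alone and cut only the frame into $O(1)$ convex quadrilaterals --- is exactly what the paper does, but the paper's execution is considerably simpler and avoids the case analysis you correctly identify as the main obstacle. The paper splits $F$ into precisely \emph{two} convex quadrilaterals along a single diagonal of the $w\times h$ notch (\figurename~\ref{fig:SplittingFrame}). This produces four ``odd'' interior angles $\alpha,\beta,\gamma,\delta$, all distinct, with only the pairs $\alpha+\delta$ and $\beta+\gamma$ summing to a right angle. The rigidity argument is then purely about these four angles: since the line of symmetry can pass through at most two corners of a simple polygon, at least two of the odd angles must meet, and a short enumeration (extending $\gamma$ or $\delta$ by right angles; mis-matching $\alpha\!\leftrightarrow\!\delta$ and $\beta\!\leftrightarrow\!\gamma$) is ruled out by area bookkeeping against the $wh$ budget. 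Once the two frame pieces are forced back into the shape of $F$, the proof of \theoremname~\ref{th:hard} applies verbatim.

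By contrast, your plan uses $m\ge 3$ trapezoids with a generic slant $s$ and pairwise-distinct axis-parallel lengths, and then argues rigidity via edge-matching. Two points deserve caution. First, your proposed analogue of Observation~\ref{obs:intersection} and Lemma~\ref{lem:alpha1} no longer has a single piece $F$ to reflect; the sentence ``reflecting the longest axis-parallel edges of the $F_j$'s across $\ell$ \dots forces $\ell$ to be parallel or orthogonal to them'' is not yet an argument, and with several frame pieces one must first control their mutual orientations before any statement about $\ell$ makes sense. Second, your edge-matching analysis implicitly assumes flush full-edge contacts, but partial contacts (a slanted edge against part of a longer axis-parallel edge, or against a $P_i$) are allowed, so ``distinct lengths'' alone does not pin down neighbours. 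These are exactly the cases your last paragraph flags, and they genuinely multiply with $m$. The paper's two-piece diagonal cut sidesteps all of this: with only four non-right angles in the entire instance, the finite check is a handful of pictures rather than a parametrized family.
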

\begin{proof}

We note that the only piece that is not a convex quadrilateral is the frame
piece $F$. Hence, we split this into two convex quadrilateral pieces as shown in
\figurename~\ref{fig:SplittingFrame}. We note that due to the dimensions of $H$
and $W$, the four angles $\alpha$, $\beta$, $\gamma$, and $\delta$ are all
unique. Furthermore, only $\alpha + \delta$ and $\beta + \gamma$ do not sum up
to multiples of $\pi/2$. If we show that any line symmetric solution aligns these
four angles as in \figurename~\ref{fig:SplittingFrame},
Theorem~\ref{th:hard} completes the proof. 

\begin{figure}
  \begin{minipage}[t]{0.5\linewidth}
    \begin{center}
      \includegraphics{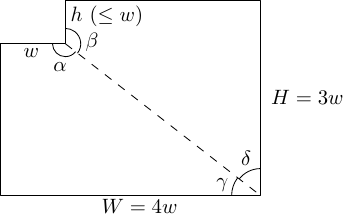}
    \end{center}
    \caption{Splitting the frame piece into two convex quadrilateral pieces.}
    \label{fig:SplittingFrame}
  \end{minipage}
  \hspace{0.05\linewidth}
  \begin{minipage}[t]{0.45\linewidth}
    \begin{center}
      \includegraphics{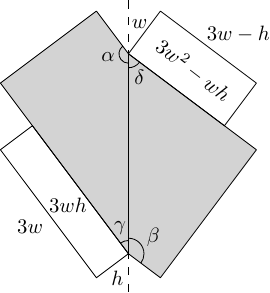}
    \end{center}
    \caption{Matching $\alpha$ to $\delta$ and $\beta$ to $\gamma$.}
    \label{fig:WrongMatching}
  \end{minipage}
\end{figure}

Assume the angles are not matched as in \figurename~\ref{fig:SplittingFrame}. We
first show that extending $\gamma$ or $\delta$ by a multiple of $\pi/2$ is not
useful. We focus on $\gamma$, but the same argument holds for $\delta$. If we
extend $\gamma$ using a right angle of the other frame piece, it creates an
imbalance resulting from the implied line of symmetry cannot be overcome using
only the remaining rectangles of combined area $wh$ (see
\figurename~\ref{fig:ExtendingAngle}). Extending $\gamma$ using the rectangles
also does not lead to a line symmetric polygon, because placing the other frame
piece afterwards still leads to an imbalanced shape.

\begin{figure}
  \begin{center}
    \includegraphics{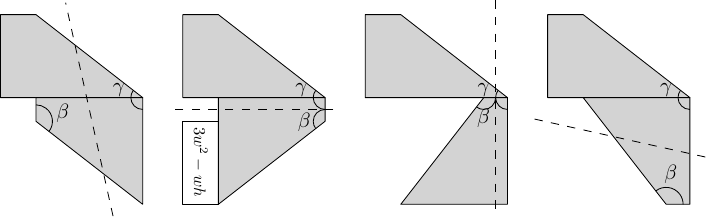}
  \end{center}
  \caption{The four cases when extending $\gamma$ using the other frame piece.}
  \label{fig:ExtendingAngle}
\end{figure}

Because the four angles are all unique and the symmetry line can pass through at most two corners of a simple polygon, at least two of these angles have to meet in a point. If $\alpha$ is matched to $\delta$ or if $\beta$ is matched to $\gamma$, we note that the created angle is not a multiple of $\pi/2$ and thus we still have three unique angles. This implies that in this case, both $\alpha$ is matched to $\delta$ and $\beta$ is matched to $\gamma$ (see \figurename~\ref{fig:WrongMatching}). 

We first show that the difference between $\alpha + \delta$ and $\beta + \gamma$ cannot be a multiple of $\pi/2$, which implies that the line of symmetry still needs to pass through both of these angles. We prove this by contradiction, so assume that the difference is a multiple of $\pi/2$. We observe from \figurename~\ref{fig:SplittingFrame} that $\alpha + \delta + \beta + \gamma = 2 \pi$, $\beta = \gamma + \pi/2$, $\alpha = \delta + \pi/2$, and $\gamma < \delta$. Hence, we need to consider only the case where $\alpha + \delta = \beta + \gamma + \pi/2$, which implies that $\delta = \gamma + \pi/4$. Because $\delta + \gamma = \pi/2$, it follows that $\gamma = \pi/8$ and that $\tan \gamma = \sqrt{2} - 1$. However, from \figurename~\ref{fig:SplittingFrame} we also observe that $\tan \gamma = (H-h)/3w$, where $2w \leq H-h < 3w$, which implies that $\tan \gamma \geq 2/3$, contradicting that $\gamma$ is $\pi/8$. Thus, the difference between $\alpha + \delta$ and $\beta + \gamma$ also cannot be a multiple of $\pi/2$. 

Because neither $\alpha + \delta$ nor $\beta + \gamma$ is a multiple of $\pi/2$ and the difference between $\alpha + \delta$ and $\beta + \gamma$ also cannot be a multiple of $\pi/2$, the only way to construct a line symmetric solution is for the symmetry line to pass through both created angles. However, this implies that in order to make a line symmetric shape, we need to at least add one region of area $3w^2 - wh$ and one of area $3wh$. Hence, the total area required is at least $3w^2 + 2wh$, which is more than the $wh$ combined area of the rectangles. Therefore, the four angles have to be aligned as in \figurename~\ref{fig:SplittingFrame}. 
\end{proof}

This result raises the question of what the simplest shape is for which the problem is strongly NP-complete. We conjecture that the problem is still strongly NP-complete even when each piece is a right triangle. 

\begin{conjecture}
  \label{conj:TriangleNPComplete}
  Symmetric assembly puzzles are strongly NP-complete even if each piece is a right triangle with area upper-bounded by a polynomial function of the number of pieces. 
\end{conjecture}

While we do not have a proof of this conjecture, we do sketch an approach to a possible proof based on a reduction from the \textsc{3-Partition} problem: It is (strongly) NP-complete to decide whether a given set of $3k$ positive integers (each integer is bounded from above by a polynomial in $k$) can be partitioned into $k$ triples, such the sum of the integers in each triple is the same. 

Let $\{a_1, ..., a_{3k}\}$ be the given set of integers in increasing order. We first transform these integers into almost squares of size $1 \times 1 + \epsilon_i$, such that the $1 + \epsilon_i$ sides of each triple sum to the same length: When we want to ensure that $\epsilon_i$ is at most $1/1000$ for each square, we transform each $a_i$ into an almost square of size $1 \times 1 + \frac{a_i}{1000 a_{3k}}$. Note that this does not change triples nor the solvability of the \textsc{3-Partition} instance. 

\begin{figure}
  \begin{minipage}[t]{0.45\linewidth}
    \begin{center}
      \includegraphics{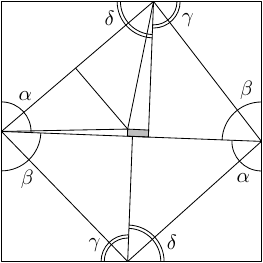}
    \end{center}
    \caption{The frame and its splitting lines. The hole for the \textsc{3-Partition} instance is shown in gray.}
    \label{fig:FrameTriangle}
  \end{minipage}
  \hspace{0.05\linewidth}
  \begin{minipage}[t]{0.45\linewidth}
    \begin{center}
      \includegraphics{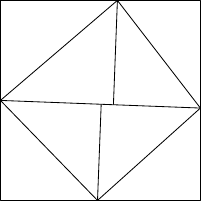}
    \end{center}
    \caption{Splitting an almost square.}
    \label{fig:AlmostSquare}
  \end{minipage}
\end{figure}

Next, we create a big square frame that has a hole of size $k \times 3 + \frac{\sum_{i=1}^{3k} a_i / k}{1000 a_{3k}}$. Note that the area of this hole is equal to the total area of the almost squares. We split the frame into right triangles as shown in \figurename~\ref{fig:FrameTriangle}, while ensuring that any combination of non-right angles is unique. 

Finally, we split the $3k$ almost squares into $24k$ right triangles. The general idea behind the splits is the same as for the frame: for each almost square, we pick four points close to the middle of its sides and split the square as shown in \figurename~\ref{fig:AlmostSquare}. More precisely, when $s$ is the length of a side, we pick a point $p \in \{\frac{s}{2} + \frac{is}{2k^{20}} \colon i\in \{1, 2, \ldots, k^{19}\}\}$and split along the line connecting the two points on the vertical sides, along the lines from the points on the horizontal sides perpendicular to the previous splitting line, and along the lines defined by points on consecutive sides (see \figurename~\ref{fig:AlmostSquare}). Note that $p$ is at most $s/2k$ away from the middle of the side. Again, we require that any combination of non-right angles is unique. 

This uniqueness of angles should ensure that the triangles can only be combined to the desired frame and almost squares. Proving this formally, however, turns out to be rather intricate; thus, we leave the full proof of the above conjecture for future research.

\section{Constant Number of Pieces}

Next we analyze symmetric assembly puzzles with a constant number of
pieces but many vertices, and show they can be solved in
polynomial time.

\begin{theorem}
\label{th:constpieces}
Given a symmetric assembly puzzle
with a constant number of pieces $k$
containing at most $n$ vertices in total,
deciding whether it has a symmetric assembly
can be decided in time that is polynomial in~$n$.
\end{theorem}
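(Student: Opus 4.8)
The plan is to reduce the decision to a bounded number of real-algebraic feasibility problems, each living in a space of dimension $O(k)$ and carrying only polynomially many polynomial constraints of constant degree, and then to solve each of them in time polynomial in $n$. First I would normalize: a global rotation and translation of the plane cost nothing, so I may assume the (unknown) line of symmetry $\ell$ of a sought assembly is the $y$-axis, and what remains free is the placement of the $k$ pieces. Guess, for each piece $P_i$, whether it is placed by an orientation-preserving or an orientation-reversing isometry; this is $2^k = O(1)$ cases and is where ``flipping'' is accounted for. In the chosen case, parametrize the placement of $P_i$ by four reals $(x_i,y_i,c_i,s_i)$, its translation together with its rotation, subject to $c_i^2 + s_i^2 = 1$, so that every vertex of the placed piece $Q_i$ is an affine function of these variables. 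The set of all placements of all pieces is then a semi-algebraic set of dimension $3k = O(1)$, sitting in $\mathbb{R}^{4k}$ and cut out by the $k$ Pythagorean equations.

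Next I would subdivide this configuration space by every ``degeneracy'' condition that can change the combinatorics of the assembly: for each vertex $u$ of one piece and each edge $e$ of another, the condition that $u$ lie on the supporting line of $e$; the conditions that two vertices coincide, that two edges be parallel, that a vertex lie on the $y$-axis, that the mirror image of a vertex across the $y$-axis coincide with another vertex, and the remaining $O(1)$ families of such conditions. Each is a polynomial of constant degree in the placement variables, and there are $\mathrm{poly}(n)$ of them, so their arrangement intersected with the Pythagorean constraints has $n^{O(k)}$ cells, one sample point of each being computable in polynomial time since the dimension is fixed. Within a cell the combinatorial type of the arrangement of the $k$ placed polygons together with the $y$-axis is invariant; in particular it is determined whether the placed interiors are pairwise disjoint, whether their union is a topological disk, and --- when it is --- what the cyclic sequence of edges of the boundary of the union is and which piece edge each of them lies on. Discard the cells in which the interiors overlap or the union is not a disk. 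In a surviving cell, ``the union is line-symmetric about the $y$-axis'' becomes a finite disjunction (over the $O(n)$ candidate automorphisms of the boundary cycle) of constant-size conjunctions of polynomial equalities and inequalities of constant degree, each asserting that $(x,y)\mapsto(-x,y)$ sends every boundary edge onto its partner; endpoints of boundary edges that are line-line intersection points are rational functions of the parameters, and one clears denominators to keep everything polynomial.

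For each orientation guess and each surviving cell, I would then decide whether the conjunction of the $k$ Pythagorean equations, the sign conditions defining the cell, and one branch of the symmetry equalities is satisfiable over $\mathbb{R}^{4k}$. This is an instance of the existential theory of the reals with $\mathrm{poly}(n)$ atomic predicates of constant degree in $O(k) = O(1)$ variables, hence decidable in time polynomial in $n$ by standard methods whose running time is polynomial in the number of polynomials and their degrees once the number of variables is fixed. Output YES exactly when some guess and some cell yield a feasible system: any genuine symmetric assembly, after the normalizing isometry, produces placement parameters satisfying such a system in the cell it falls into, and conversely a feasible point reconstructs an interior-disjoint placement whose union is a simple polygon invariant under reflection in the $y$-axis. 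The whole procedure runs in time $n^{O(k)}$, with the exponent linear in $k$.

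The main obstacle is the middle step. One must argue with care that the structure of the assembly relevant to the conclusion --- above all the outer boundary, which is not prescribed but emerges from how the pieces overlap --- is truly constant on each cell and is pinned down by the stated $\mathrm{poly}(n)$ degeneracy surfaces, and that the statement ``the union of the $Q_i$ is a line-symmetric simple polygon'' genuinely collapses, inside a fixed cell, to a polynomial system of constant size and degree. Disposing cleanly of the degenerate configurations --- the axis passing through vertices of pieces, collinear consecutive boundary edges, and pieces that touch only at isolated points --- is the fiddly residue that one must not overlook; everything else is bookkeeping and an appeal to fixed-dimension real feasibility.
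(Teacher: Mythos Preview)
Your approach is correct and takes a genuinely different route from the paper. The paper proceeds combinatorially: it proves a structural lemma (their Lemma on the three cases) showing that every symmetric assembly either has two connected pieces sharing a vertex, or realizes a connection whose offset equals some third edge length, or decomposes into line-symmetric pieces together with congruent pairs; it then handles the first two cases by recursively merging a pair of pieces (reducing $k$ by one) and the third by an explicit ``half-tree'' parametrization of how the symmetric pieces sit on the axis and the congruent pairs hang off one side, obtaining an $O(n^{5k})$ bound. You instead treat the question uniformly as fixed-dimension real feasibility: parametrize all placements by $O(k)$ reals, stratify by $\mathrm{poly}(n)$ degeneracy hypersurfaces so the combinatorics of the union freezes on each cell, and invoke existential-theory-of-the-reals decision in bounded dimension.

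What each buys: the paper's argument is self-contained and gives an explicit exponent, but the proof of the three-case lemma is long and delicate. Your route is conceptually cleaner, uses off-the-shelf machinery, and would adapt with no change to rotational or other symmetries; the price is a hidden constant in $n^{O(k)}$ and reliance on Basu--Pollack--Roy style bounds. Two small points worth tightening. First, the conjunctions asserting that reflection sends each boundary edge to its partner are not ``constant-size'' but $O(n)$-size, since the boundary has $O(n)$ edges; this is harmless for the running time but should be stated correctly. Second, your own caveat at the end is exactly the place to be careful: valid assemblies necessarily live on lower-dimensional cells (the pieces must touch along segments for the union to be connected), so you must enumerate cells of all dimensions and, for each, read off the boundary structure from a sample point of \emph{that} cell before writing down the symmetry constraints; testing the symmetry system only against the combinatorics of an adjacent open cell could miss or misidentify the assembly.
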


To prove this theorem, we present a brute force algorithm for solving a SAP that
runs in polynomial time for constant $k$. We say two pieces in a symmetric
assembly are \emph{connected} to each other if their intersection in the
symmetric assembly contains a non-degenerate line segment, and let the
\emph{connection} between two connected pieces be their intersection not
including isolated points. We will call two pieces \emph{fully} connected if
their connection is exactly an edge of one of the pieces, and \emph{partially}
connected otherwise; note that two pieces may be partially connected along more
than one edge. Call a piece a \emph{leaf} if it connects to at most one piece,
and a \emph{branch} otherwise. Given a leaf, let its \emph{parent} be the piece
connected to it (if it exists), and let its \emph{siblings} be all other pieces
connected to its parent. An illustration demonstrating these terms can be found
in \figurename~\ref{fig:def}.

In addition, we will need to construct simple polygons from provided simple
polygons by laying them next to each other along an edge. Let $E_P$ denote the
set of directed edges $(p_i,p_j)$ from a vertex $p_i$ to an adjacent vertex
$p_j$ of some simple polygon $P$.

Given an edge $e\in E_P$, we denote its length by $\lambda(e)$. Let $e_P =
(p_1,p_2)$ be a directed edge of a polygon $P$, let $e_Q =(q_1,q_2)$ be a
directed edge of a polygon $Q$, and let $d$ be a (non-negative) length strictly
less than $\lambda(e_P)$. Orient $P$ and $Q$ such that $e_P$ exists in a
clockwise traversal of $P$, $e_Q$ exists in a counter clockwise traversal of
$Q$, $e_Q$ is collinear and in the same direction as $e_P$, and the distance
between $p_1$ and $q_1$ is distance $d$. Call these transformations the mapping
$g: \{ p \in P \cup Q \mid P, Q \in \puz \} \rightarrow \mathbb{R}^2$, where $\{
g(p) \mid p \in P \cup Q \}$ for each $P, Q \in \puz$ is a rigid transformation
of $P$ and $Q$. 
Then we define \join{$e_P,e_Q,d$} to be $g(P)\cup g(Q)$ when $g(P)\cup g(Q)$ is
a simple polygon, and otherwise the empty set. See \figurename~\ref{fig:def}.

\begin{figure}[htb]
  \centering
  \includegraphics[width=\textwidth]{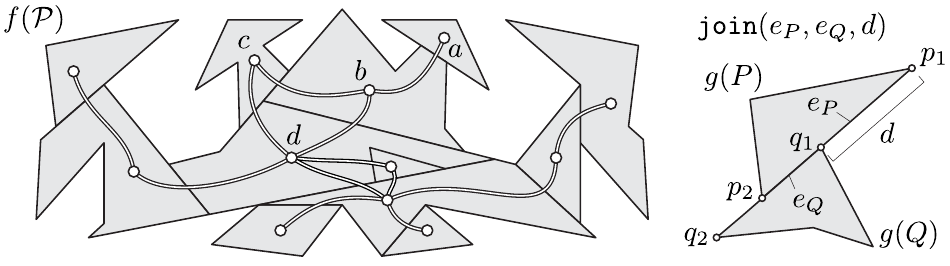}     
  \caption{ 
  [Left] Example symmetric assembly $\puz$ showing its connection graph. 
  Pieces $a$ and $d$ are fully connected to piece $b$, and $c$ is partially connected to $b$. 
  Pieces $b$, $c$, and $d$ are branches. Piece $a$ is a leaf, with
  $b$ its parent and $c$ and $d$ the siblings of $a$.
  [Right] Visualization of a \textsc{join} operation.}
  \label{fig:def}
\end{figure}

If a SAP has a symmetric assembly, let its \emph{connection graph} be a graph on
the pieces with an edge connecting two pieces if they are connected in the
symmetric assembly. Because a symmetric assembly is a simple polygon by
definition, its connection graph is connected and has a spanning tree; we can
then construct the assembly using a concatenation of \join procedures in
breadth-first-search order from an arbitrary root. Because parameter $d$ is not
discrete, the total solution space of simple polygons that are constructible
from the pieces of a SAP may be uncountable. However, we will exploit the
structure of symmetric assemblies to search only a finite set of configurations.

\begin{figure}[htb]
  \centering
  \includegraphics[width=\textwidth]{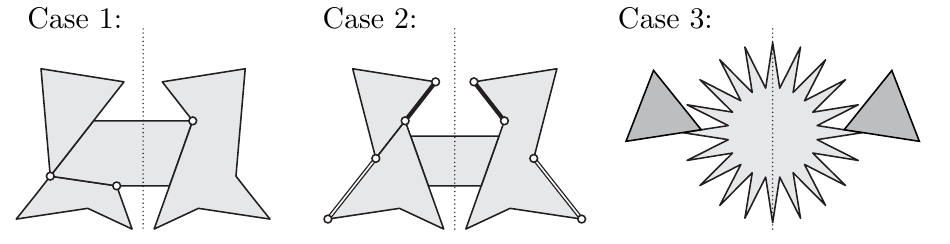}
  \caption{Examples of symmetric assemblies belonging to each case.
  Case 1 highlights vertices of connected pieces that intersect. Case 2
  highlights \textsc{join} operations using lengths of piece edges. Case 3
  is constructed from one symmetric piece
  and a pair of congruent pieces. }
  \label{fig:cases}
\end{figure}

In order to enumerate possible configurations, 
we would like to distinguish between three cases of the puzzle (see \figurename~\ref{fig:cases}), specifically:
\begin{enumerate}[{Case} 1:]
\item the puzzle has a symmetric assembly in which two 
connected pieces share a vertex on their connection;
\item the puzzle has no symmetric assembly satisfying Case~1, but has one in 
which the distance between vertices from the connecting edges between two
connected pieces has the same length as another whole edge in the piece set
(we say the connection between the two pieces \emph{constructs} the length of
another edge); 
\item the puzzle has no symmetric assembly satisfying Case~1 or Case~2, but has
one in which a nonempty set of pieces are themselves line symmetric about the
line of symmetry of the symmetric assembly, and any remaining pieces are pairs
of congruent pieces symmetric about the line of symmetry.

\end{enumerate}

The following lemma ensures that these three cases are exhaustive.

\begin{lemma}
If a SAP has a symmetric assembly, it can be described by one of the above three
cases.
\label{lem:cases}
\end{lemma}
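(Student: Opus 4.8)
The plan is to argue by contradiction: suppose a SAP has a symmetric assembly $O^*$ that satisfies none of the three cases, and derive a contradiction by showing that the assembly must in fact be a Case 3 assembly. Fix a line of symmetry $\ell$ of $O^*$. Since we are not in Case 1, no two connected pieces share a vertex on their connection; and since we are not in Case 2, the distance between the relevant endpoints of any connection never equals the length of any edge of a third piece. I would first partition the pieces of the assembly according to how they sit relative to $\ell$: a piece is \emph{central} if the reflection $\sigma_\ell$ maps it to itself (as a placed copy), and \emph{off-axis} otherwise. The central pieces form the ``nonempty set of pieces symmetric about the line of symmetry'' required by Case 3 — although I must still argue this set is nonempty, which I would get from the fact that $\ell$ must meet the interior of the assembly, hence pass through some piece, and the local structure near $\ell$ (combined with the failure of Cases 1 and 2) forces at least one piece straddling $\ell$ to be mapped to itself.

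The core of the argument is to show the off-axis pieces come in congruent reflected pairs. Consider the reflection $\sigma_\ell$, which is a bijection of $O^*$ to itself. I would like to say $\sigma_\ell$ maps each placed piece to a placed piece, pairing up the off-axis ones. The obstacle is that a priori $\sigma_\ell$ could map the \emph{union} of pieces onto itself while slicing individual pieces apart — i.e., the partition of $O^*$ into pieces need not be $\sigma_\ell$-invariant. This is exactly where the exclusion of Cases 1 and 2 does the work: I would argue that if $\sigma_\ell$ did not respect the piece partition, then the image under $\sigma_\ell$ of some connection (a maximal shared segment between two pieces) would have to cut through the interior of a third piece, and the endpoints of that connection, reflected, would land at vertices or along edges in a way that either creates a shared vertex on a connection (Case 1) or constructs an edge length of a third piece (Case 2). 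Making this local case analysis airtight — tracking what happens at the endpoints of connections and at points where $\ell$ crosses an edge — is the step I expect to be the main obstacle, and it will require care about degenerate incidences (collinear edges of distinct pieces, vertices lying in edge interiors).

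Once $\sigma_\ell$ is known to permute the placed pieces, the rest is bookkeeping. A piece fixed by this permutation is central and, being setwise invariant under the reflection $\sigma_\ell$, is itself a line-symmetric polygon — its own reflection line being $\ell$ (or a line parallel/through it that coincides with $\ell$ in the placement), so \issym would accept it. A piece not fixed is swapped with a distinct placed piece that is its mirror image, hence congruent to it; these are the ``pairs of congruent pieces'' of Case 3. Since every piece is either central or in such a pair, and the central set is nonempty, the assembly meets the description of Case 3, contradicting our assumption. Therefore every symmetric assembly falls into Case 1, Case 2, or Case 3, which is the claim of Lemma~\ref{lem:cases}. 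I would also remark that the three cases need not be mutually exclusive — the lemma only asserts coverage — so no disjointness argument is needed.
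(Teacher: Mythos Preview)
Your high-level framework matches the paper's: argue by contradiction, assume neither Case~1 nor Case~2 holds, and show the assembly must then be a Case~3 assembly by proving that the reflection across $\ell$ permutes the placed pieces. Where you diverge is in \emph{how} you propose to establish that permutation property, and this is where the proposal has a genuine gap.

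The paper does not attempt to show directly that $\sigma_\ell$ respects the piece partition. Instead it first observes that exclusion from Case~1 forces the connection graph to be a \emph{tree} (otherwise a cycle would force two connected pieces to share a vertex), and then runs a leaf-peeling induction: pick a leaf $P$ whose siblings include at most one branch, and argue that either $f(P)=\mu(P)$ (so $P$ is symmetric about $\ell$) or $\mu(P)$ is exactly the image of another single piece congruent to $P$. Ruling out the alternatives---$\mu(P)$ strictly contains a piece, $\mu(P)$ straddles two pieces, or $\mu(P)$ is strictly contained in a piece---is a multi-page geometric case analysis that repeatedly uses exclusion from Case~2 together with the tree structure (e.g., rooting at the nearest piece crossing $\partial\mu(P)$ and chasing a longest root-to-leaf path). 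The tree/leaf machinery is what localizes the analysis to a single connection segment at a time; without it, your ``if $\sigma_\ell$ slices a piece then some connection reflects into a third piece and triggers Case~1 or Case~2'' is not obviously true, because a reflected connection could land entirely inside a single piece, or be absorbed by a chain of several pieces, without any single offending incidence. You flag this step as ``the main obstacle'' and leave it open; the paper's proof shows it is indeed the bulk of the work and that the connection-tree structure is the missing organizing idea.

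A second, smaller gap: your argument that the set of central pieces is nonempty (``$\ell$ passes through some piece, and local structure forces that piece to be mapped to itself'') does not go through as stated. A piece crossed by $\ell$ need not be $\sigma_\ell$-invariant---its reflection could be covered by a different collection of pieces. The paper sidesteps this entirely: the leaf-peeling induction classifies every piece as symmetric-about-$\ell$ or paired, and nonemptiness of the symmetric set is never separately argued (indeed, Case~3 as used later in the algorithm only needs the partition into symmetric pieces and congruent pairs, and the trunk may be built from pairs as well).
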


To prove this lemma, we will use the following auxiliary results.

\begin{lemma}
If a SAP has a symmetric assembly that is not Case~1, the connection
graph of the symmetric assembly is a tree and all connections are single line
segments.
\label{lem:tree}
\end{lemma}

\begin{proof}

Let $\puz$ be a SAP with symmetric assembly $f: \{ p \in P \mid P \in \puz \} \rightarrow \mathbb{R}^2$, such that $\{ f(p) \mid p \in P \}$ for $P \in \puz$ is a rigid transformation of $P$,
that is not Case~1. Suppose, for contradiction, that the connection graph of $f(\puz)$ is not a tree, so that there
exists a cycle $C$ in the connection graph. Let $S$ be a simple closed curve
embedded in $f(\puz)$ that traverses the piece connections from $C$. The region
$R$ bounded by $S$ is completely covered by $f(\puz)$, or else it would contain
a hole, contradicting that $f(\puz)$ is a simple polygon. Because $f(\puz)$ covers
$R$ and $S$ corresponds to a cycle in the connection graph, then $R$ contains a
vertex $v$ in $R$ of some piece $P$. But then $P$ must share vertex $v$ along
its connection with another piece, contradicting exclusion from Case~1. 

So the connection graph of $f(\puz)$ is a tree. Now suppose for contradiction
there exists two connected pieces $P$ and $Q$ whose connection is more than one
line segment. Then there exists a closed curve embedded in $f(\puz)$ that
crosses from $P$ to $Q$ along two distinct edges. Then the region $R$ bounded
by the cycle must contain a vertex $v$ of $P$. If $f(\puz)$ covers $R$, $P$ must
share vertex $v$ along its connection with a vertex of $Q$, contradicting
exclusion from Case~1. Otherwise, $f(\puz)$ does not cover $R$, contradicting
that $f(\puz)$ is a simple polygon. \end{proof}

\begin{lemma}
If a SAP has a symmetric assembly that is not Case~1 or Case~2, the reflection
of any partially connected leaf is exactly another piece congruent to the leaf.
\label{lem:partial}
\end{lemma}

\begin{proof}

Let $\puz$ be a SAP with a symmetric assembly $f: \{ p \in P \mid P \in \puz \} \rightarrow \mathbb{R}^2$, such that $\{ f(p) \mid p \in P \}$ for $P \in \puz$ is a rigid transformation of $P$,  
that is not Case~1 or Case~2. 
Let $s: f(\puz)\rightarrow f(\puz)$ be an automorphism reflecting $f(\puz)$ across a line of symmetry $L$, 
and let $\mu = s \circ f$, mapping each point $p\in P_i$ of a piece $P_i\in \puz$ to the corresponding point in the reflection of $f(P_i)$ across $L$.

Consider a partially connected leaf $P$ whose parent is $Q$ with edge $e_P$
connected to edge $e_Q$, and suppose for contradiction that $\mu(P)$ is not
exactly covered by another piece congruent to $P$. We first show that a single
piece $P'$ contains $\mu(e_P)$ under $f$ so that $f(P')\subset \mu(P)$, 
and then show that in fact $f(P') = \mu(P)$.

By Lemma~\ref{lem:tree} the partial connection is a single line segment, and
$\ell_P = f(e_P)\setminus f(e_Q)$ is non-empty. $s(\ell_P)$ cannot be covered by
more than one piece or else two pieces would share a vertex along their
connection contradicting exclusion from Case~1. Also $s(\ell_P)$ cannot be
exactly the edge of another piece or else the connection between $P$ and $Q$
would construct its length, contradicting exclusion from Case~2. Thus,
$s(\ell_P)$ is a strict subset of an edge $e_{P'}$ from some piece $P'$ under
$f$. Further, $\mu(e_P) = f(e_{P'})$. Suppose for contradiction it did not, and
an endpoint $p$ of $e_{P'}$ maps to a point interior to $f(e_P)\cap f(e_Q)$. Then
$f(p)$ is also an interior point of $f(\puz)$, so $\mu(p)$ is also an interior
point, and $f(e_{P'})$ would share a vertex along its connection with another
piece contradicting exclusion from Case~1. So $\mu(e_P)$ is exactly edge
$e_{P'}$ of $f(P')$. And because $P$ is a leaf, $f(P') \subseteq \mu(P)$.

Now we show that in fact $f(P') = \mu(P)$. Suppose for contradiction that
$f(P')$ is a strict subset of $\mu(P)$, meaning that some other piece is also
fully contained in $\mu(P)$. Let $Q'$ be the first such piece connecting to $P'$
in a clockwise traversal of $P'$ from $e_{P'}$. Then the connection between $Q'$
and $P'$ must either construct the length of some edge from $P$ under $f$,
contradicting exclusion from Case~2, or $Q'$ and $P'$ must share a vertex on
their connection, contradicting exclusion from Case 1. So, $P'$ is a piece
congruent to $P$. 
\end{proof}

We now use these intermediate results to prove Lemma~\ref{lem:cases}.

\begin{proof}[of Lemma~\ref{lem:cases}]

Suppose for contradiction there exists a SAP $\puz$ having a symmetric assembly
$f: \{ p \in P \mid P \in \puz \} \rightarrow \mathbb{R}^2$, such that $\{ f(p) \mid p \in P \}$ for $P \in \puz$ is a rigid transformation of $P$,  
that does not satisfy any of the above cases,
and assume $\puz$ has the fewest pieces among all such SAPs. We will identify a
symmetric leaf or a congruent pair of leaves that can be removed from $\puz$ to
form a SAP with fewer pieces. The new SAP must have the same classification as
the original, contradicting the minimality of $\puz$.

Let $s: f(\puz)\rightarrow f(\puz)$ be an automorphism reflecting $f(\puz)$ across a line of symmetry $L$, 
and let $\mu = s \circ f$, mapping each point $p\in P_i$ of a piece $P_i\in \puz$ to the corresponding point in the reflection of $f(P_i)$ across $L$.
Let $P$ be a leaf in the
symmetric assembly whose siblings include at most one branch. Such a $P$ exists,
as any leaf with longest distance from an arbitrary root satisfies this
property. We claim that either $P$ is symmetric about line of symmetry $L$, or
$\mu(P)$ is exactly covered by a second piece of the SAP congruent to $P$,
contradicting the minimality of $\puz$.

First, if $P$ has no parent and is the only piece in the symmetric assembly, $P$
must be a line symmetric polygon. Otherwise, let $Q$ be the parent of $P$ with
edge $e_P$ of $P$ connected to edge $e_Q$ of $Q$. Let $e_{PQ}$ denote the subset
of $e_Q$ that maps to the intersection $f(e_P)\cap f(e_Q)$. We show that
$f(e_{PQ})$ and $\mu(e_{PQ})$ are not the same line segment. Suppose for
contradiction $f(e_{PQ}) = \mu(e_{PQ})$. Then either $f(e_{PQ})$ lies along $L$
or is symmetric about $L$. 

If $f(e_{PQ})$ lies along $L$, consider either endpoint $p$ of $e_P$. $f(p)$ is
either in the interior or on the boundary of $f(\puz)$. If $f(p)$ is interior,
then the two edges of $P$ incident to $f(p)$ must be connected to two different
pieces, contradicting that $P$ is a leaf. Alternatively, $f(p)$ is on the
boundary, and a vertex of some other piece $P'$ must contain $f(p)$,
contradicting exclusion from Case 1. 

Alternatively $f(e_{PQ})$ is symmetric about $L$. Because $P$ is a leaf, it
connects to the rest of the symmetric assembly only through $f(e_{PQ})$, so for
the assembly to be symmetric, $f(P)$ must be the same as $\mu(P)$, and piece $P$
is a line symmetric polygon.

\begin{figure}
  \centering
  \includegraphics[width=\textwidth]{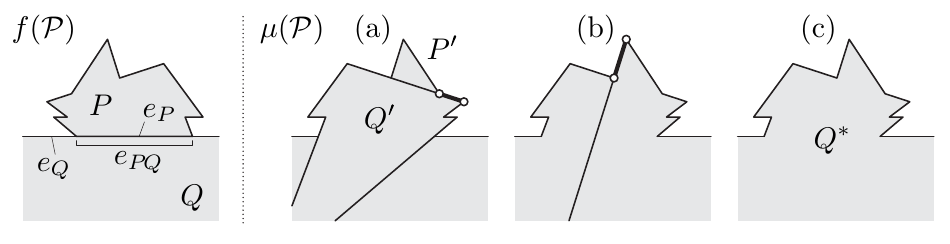}
  \caption{Possible topological configurations of $\mu(P)$.}
  \label{fig:mirrorcase}
\end{figure}

So $f(e_{PQ})$ and $\mu(e_{PQ})$ are not the same line segment. We claim that
$\mu(P)$ is exactly covered by another piece of the SAP congruent to $P$.
Suppose for contradiction it were not. Then by Lemma~\ref{lem:partial}, $P$ must
be fully connected to $Q$. Then $\mu(P)$ either (a) contains a piece as a strict
subset, (b) does not fully contain a piece but intersects the interiors of
multiple pieces, or (c) is a strict subset of a single piece (see
\figurename~\ref{fig:mirrorcase}).

First suppose (a), so $\mu(P)$ contains some piece as a strict subset. We will
say that piece $P$ \emph{covers} a piece $P'$ if $f(P')$ is a strict subset of
$\mu(P)$. We will identify a leaf piece $P'$ covered by $P$, whose parent
connection constructs the length of an edge of $P$, contradicting exclusion from
Case 2. To find such a piece, consider any piece $R$ that is not covered by $P$,
and let $S$ be a piece from among all pieces covered by $P$ that has longest
distance to $R$ in the connection graph. This condition ensures that $S$ is a
leaf, connected to some piece $Q'$ through edge $e_{Q'}$ from $Q'$. Because $S$
is covered by $P$, at least one endpoint of $e_{Q'}$ maps to point contained in
$\mu(P)$. Let $q$ be such an endpoint. Point $f(q)$ is a vertex of the symmetric
assembly or else the connection of $Q$ and some other piece would share a vertex
on their connection at $f(q)$. Let $P'$ be the piece connected to $e_{Q'}$ with
connection closest to $q$. $P'$ is a leaf or else $S$ would not have had a
longest distance to $R$ in the connection graph. Further, because $S$ is covered
by $P$, so is $P'$. By Lemma~\ref{lem:partial}, the connection between $P'$ and
$Q'$ must be fully connected. If $f(e_{P'}) = f(e_{Q'})$ then $P'$ and $Q'$ share
vertices along their connection, contradicting exclusion from Case 1. If
$f(e_{Q'})\subset f(e_{P'})$, then because $P'$ is a leaf, $f(e_{P'})\setminus
f(e_{Q'})$ constructs the lengths of two edges of $P$, contradicting exclusion
from Case 2. So edge $e_{P'}$ fully connects $P'$ to $Q'$ in the assembly. And
because no other piece connects to $e_{Q'}$ between vertex $q$ and the connection
between $P'$ and $Q'$, the distance betwen them constructs the length of an edge
of $P$, contradicting exclusion from Case 2. So $\mu(P)$ does not contain a
piece as a strict subset.

Now suppose (b), so that two connected pieces intersect $\mu(P)$. The edges
connecting these two pieces must overlap in $\mu(P)$ to construct a length equal
to an edge of $P$, contradicting exclusion from Case~2. So $\mu(P)$ does not
intersect the interior of multiple branch pieces.

Finally suppose (c), and let $\mu(P)$ be the strict subset of some piece $Q^*$.
Let $\ell$ be the line collinear with segment $f(e_{PQ})$, and let
$e_{\ell}$ be the subset of $Q$ that maps to the largest connected subset of
$\ell \cap f(Q)$ containing $f(e_{PQ})$. Consider the two disconnected sections
of the boundary of $Q$ between an endpoint of $e_{PQ}$ and an endpoint of
$e_{\ell}$, which must each be more than an isolated point or exclusion from
Case~1 would be violated. Piece $P$ has at most one branch sibling, so at most
one of these sections can be connected to a branch. Let $q$ be an endpoint of
$e_\ell$ in a section not connected to a branch. 

\begin{figure}[htb]
  \centering
  \includegraphics[width=\textwidth]{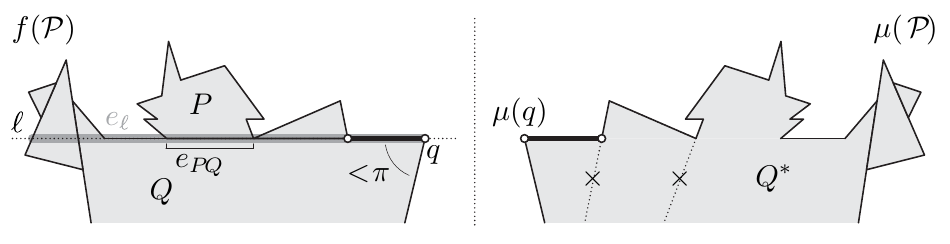} \\
  \includegraphics[width=\textwidth]{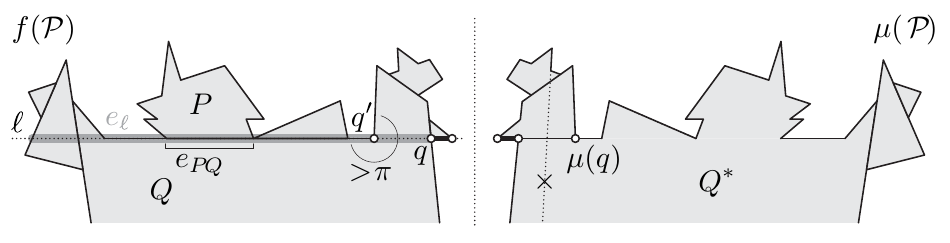}
  \caption{Considering if $\mu(P)$ is a strict subset of $Q^*$ and 
  the boundary between $e_{PQ}$ and $q$ is a [Left] straight line or [Right] not a straight line.}
  \label{fig:convexang}
\end{figure}

Consider the boundary of $Q$ between $e_{PQ}$ and $q$. Suppose this boundary
were a line segment subset of $e_Q$, implying the internal angle of $Q$ at $q$
is less than $\pi$; see \figurename~\ref{fig:convexang}. Point $q$ is not
included in the connection between $Q$ and another piece through $e_Q$. If it
were, it would be a partial connection with a leaf piece, which by
Lemma~\ref{lem:partial} would be part of a partially connected pair
contradicting the minimality of $f(\puz)$. Further, $\mu(q)$ is a vertex of
$f(Q^*)$ or else $Q^*$  would connect to another piece somewhere on the segment
between $\mu(e_{PQ})$ and $\mu(q)$, and their connection would construct an edge
of the same length as an edge from a leaf connected to $f(e_Q)$, contradicting
exclusion from Case~2. The edge of $Q^*$ adjacent to $\mu(q)$ contained in
$\mu(e_Q)$ will have the same length as the subset of $e_Q$ between $q$ and a
vertex of a leaf, contradicting exclusion from Case~2.

Thus, the boundary of $Q$ between $e_{PQ}$ and $q$ is not a line segment, so
$f(Q)$ must cross $\ell$, and the endpoint $q'$ of $e_Q$ in this section is a
vertex of $Q$ with internal angle greater than $\pi$; see
\figurename~\ref{fig:convexang}. By the same argument as we applied to $q$ in
the preceding paragraph, $\mu(q')$ must be in $f(Q^*)$, and if it were a vertex,
we would have the same contradiction as with $q$ before. However this time
$\mu(q')$ need not be a vertex of $f(Q^*)$ because $f(Q^*)$ may extend past
$\mu(q')$, with $Q^*$ connecting to another piece on the other side of $e_\ell$.
However, the connection between these pieces will construct an edge that is the
same length as an edge in either $Q$ or a leaf connected to $Q$, and we have
arrived at our final contradiction. So if $P$ is not line symmetric, $\mu(P)$ is
itself a piece of the SAP congruent to $P$, contradicting the minimality of
$\puz$.
\end{proof}

Because every symmetric assembly can be classified as one of these cases, we can
check for each case to decide whether the SAP has a symmetric assembly. Given a SAP
that does not satisfy Case~1 or Case~2, by Lemma~\ref{lem:cases} it must satisfy
Case~3 if it has a symmetric assembly. However, satisfying Case~3 is not
sufficient to ensure a symmetric assembly. For example, two congruent regular
polygons with many sides and a single regular star with many spikes cannot by
themselves form a symmetric assembly, though they satisfy Case~3, because no
pair of edges can be joined without making the pieces overlap (for example, the
Case 3 example from Figure~10, exchanging to the triangular pieces for large
regular hexagons). Thus given a SAP in Case~3, we must search the configuration
space of possible connected arrangements of the pieces for an arrangement that
forms a simple polygon. 
 
Recall that the connection graph for a symmetric assembly not in Case~1 must be
a tree. For a SAP with $k$ pieces, consisting of at most $n$ vertices in total,
Cayley's formula says the number of distinct connection trees is
$k^{k-2}$~\cite{cayley}. However, even if two pieces are connected, they could
be connected through $O(n^2)$ different pairs of directed edges, so the number of
different \emph{edge distinguishing connection trees}, connection trees
distinguishing between which pairs of edges are connected, can be no more than
$n^{2k}k^{k} = O(n^{2k})$ ($k$ is constant). As an instance of Case~3, $\puz$
consists of one or more symmetric pieces, with the rest being congruent pairs.
Let $\mathcal{D}_\puz$ and $\mathcal{D}'_\puz$ be maximal disjoint subsets of
$\puz$ such that there exists a matching
$\eta:\mathcal{D}_\puz\rightarrow\mathcal{D}'_\puz$ between pieces in
$\mathcal{D}_\puz$ and $\mathcal{D}'_\puz$ such that matched pairs are
congruent. Let $\mathcal{S}_\puz$ be the set of symmetric pieces in $\puz$ not
in $\mathcal{D}_\puz$ or $\mathcal{D}'_\puz$. Let  $\mathcal{D}_s$ denote some
subset of the symmetric pieces contained in $\mathcal{D}_\puz$, and define a
\emph{trunk} to be a subset of symmetric pieces $\mathcal{R} =
\mathcal{S}_\puz\cup\mathcal{D}_s\cup\eta(\mathcal{D}_s)$ that can be connected
into a simple polygon without overlap while aligning each of their lines of
symmetry to a common line $L$ (see \figurename~\ref{fig:trunk}). Define a
\emph{half tree} $T$ to be an edge distinguishing connection tree on
$\mathcal{R}\cup\mathcal{D}_\puz$ such that every piece in $\mathcal{D}_\puz$
connected to a piece $R$ in $\mathcal{R}$ connects through an edge of $R$
intersecting the same half-plane bounded by $L$. We call this half-plane the
\emph{connecting half-plane}, with the other half-plane the \emph{free
half-plane}. The reason we define half trees is if we can find a point in their
configuration space for which pieces do not intersect and for which pieces in
$\mathcal{D}_\puz$ not in the trunk do not intersect the free half-plane, we can
place the remaining congruent pieces in $\mathcal{D}_\puz\setminus
\mathcal{D}_s$ at the mirror image of their respective matched pairs to complete
a symmetric assembly. If a symmetric assembly exists satisfying Case 3, the
assembly will correspond to a point in the constructed configuration space by
definition.

\begin{figure}[tb]
  \centering
  \includegraphics[width=\textwidth]{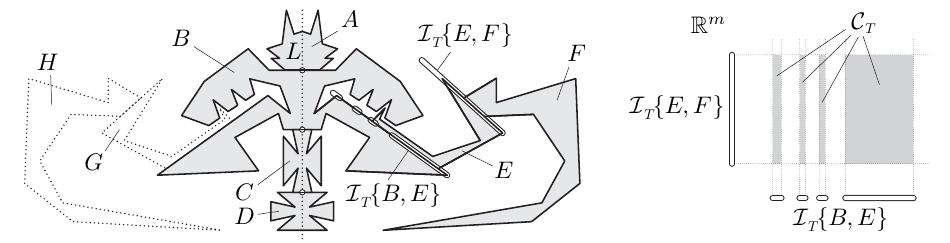}
  \caption{An example showing a SAP $\puz$ satisfying Case 3, with $\mathcal{S}_\puz = \{A,B\}$,
  $\mathcal{D}_\puz = \{C,E,F\}$, $\mathcal{D}'_\puz = \{D,G,H\}$,
  $\mathcal{D}_s = \{C\}$, $\eta(\mathcal{D}_s) = \{D\}$, and
  trunk $\mathcal{R} = \{A,B,C,D\}$. $\mathcal{I}_T$ for two connected pieces
  in the trunk is just a single point as shown by the midpoint of their
  connection. Pieces not in the trunk have a degree of freedom sliding along
  their connection. $\mathcal{I}_T\{E,F\}$ is a single interval where $F$ can
  attach to $E$, while $\mathcal{I}_T\{B,E\}$ is four intervals. The right
  diagram shows $\mathcal{C}_T$ the Cartesian product of each $\mathcal{I}_T$. }
  \label{fig:trunk}
\end{figure}

Let $\mathcal{T}_\puz$ be the set of possible half trees of $\puz$. Let
$\mathcal{L}_T$ be the set of undirected edges $\{P,Q\}$ where piece $P$ is
connected to piece $Q$ in tree $T\in\mathcal{T}_\puz$, and let $m =
|\mathcal{L}_T| \leq k$. For a fixed edge distinguishing connection tree, the
orientation of each piece is fixed as pieces may only translate along their
specified connection. We want to define a set of intervals
$\mathcal{I}_T\{P,Q\}$ where we could join pieces $P$ and $Q$ along respective
edges $e_P$ to $e_Q$ that are connected in tree $T$, while together forming a
simple polygon without overlap. For each $\{P,Q\}\in\mathcal{L}_T$ with $e_P$
and $e_Q$ the respective connecting edges of $P$ and $Q$ with $\lambda(e_P) \geq
\lambda(e_Q)$, let $\mathcal{I}_T\{P,Q\}$ be defined as follows. If $P$ and $Q$
are both in $\mathcal{R}$, let $\mathcal{I}_T\{P,Q\}$ be the empty set if
$\join(e_P,e_Q,d_{PQ})$ is the empty set and $\{d_{PQ}\}$ otherwise, where we
use $d_{PQ}$ to denote $|\lambda(e_P)-\lambda(e_Q)|/2$, the distance $d$ would
need to be in order to align the midpoints of $e_P$ and $e_Q$. Alternatively if
$P$ or $Q$ are not both in $\mathcal{R}$, let $\mathcal{I}_T\{P,Q\}$ be the
closure of the set of distances $d$ for which $\join(e_P,e_Q,d)$ is a simple
polygon for which $P$ and $Q$ do not share a vertex along their connection. Note
that if $P$ or $Q$ are not both in $\mathcal{R}$, $\mathcal{I}_T\{P,Q\}$ will be
a sequence of positive length closed intervals. Each interval endpoint
represents a point at which $P$ and $Q$ would just intersect, no longer forming
a simple polygon. The number of such points is upper bounded by the $O(n^2)$
possible intersections between some edge of $P$ and some edge of $Q$ when
sliding the two pieces along their connection; so the number of distinct
intervals in $\mathcal{I}_T\{P,Q\}$ is at most quadratic in the number of
vertices, $O(n^2)$. Any fixed arrangement of the pieces consistent with an edge
distinguishing connection tree $T$ joins each pair of pieces by fixing one point
in every $\mathcal{I}_T\{P,Q\}$, so the set of configurations is a subset of
$\mathbb{R}^m$. Ignoring overlap between pieces that are not connected, the
configuration space $\mathcal{C}_T$ of possible arrangements is defined as the
Cartesian product of $\mathcal{I}_T\{P,Q\}$ for every $\{P,Q\}\in\mathcal{L}_T$.
Thus $\mathcal{C}_T$ is a set of $O(n^m)$ disjoint $m$-dimensional
hyperrectangles in $\mathbb{R}^m$.

We now describe the subset of $\mathbb{R}^m$ where intersection occurs between
two pieces that are not connected in $T$. If two pieces in a configuration
overlap, by continuity there exist two edges $e_P$ and $e_Q$ from two distinct
pieces $P$ and $Q$ that also intersect. The positions of $e_P$ and $e_Q$ are
translations parameterized by a point in $\mathcal{C}_T$ and the region in which
the two edges intersect is a convex region
$\mathcal{X}_T\{e_P,e_Q\}\subset\mathbb{R}^m$ bounded by four hyperplanes
forming the $m$-dimensional parallelogram representing the intersection of the
two edges. For each $O(n^2)$ pair of edges from distinct pieces that are not
connected, we can subtract each $\mathcal{X}_T\{e_P,e_Q\}$ from $\mathcal{C}_T$
to form $\mathcal{C}'_T$. 

If $\mathcal{C}'_T$ is empty, there will certainly be no symmetric assembly
satisfying Case 3. If $\mathcal{C}'_T$ is a single point, tree $T$ places all
pieces in the trunk $\mathcal{R}$ to form a symmetric assembly. Lastly, if
$\mathcal{C}'_T$ is non-empty and contains a point in its interior, then there
exists a symmetric assembly because it will be a point in the configuration space
avoiding overlap between pieces. Points on the boundary of $\mathcal{C}'_T$
correspond to configurations that are non-simple (the symmetric assembly is not
homeomorphic to a disc), as the boundaries of each $\mathcal{I}_T$ not between
two pieces in $\mathcal{R}$ and the boundaries of each $\mathcal{X}_T$
correspond to configurations which produces a hole in the assembly or a cycle in
the connection graph. Thus, if $\puz$ has a symmetric assembly satisfying Case
3, $\mathcal{C}'_T$ will have a point on its interior or be a single point.

Consider the function \three described in Algorithm~1.

\begin{algorithm}[t]
  \DontPrintSemicolon
  \myfun{\three{$\puz$}}{
  \Input{Symmetric assembly puzzle $\puz$.}
  \Output{\True if $\puz$ has a Case 3 symmetric assembly, // \False otherwise.}
  \For{$T\in\mathcal{T}_\puz$}{
    $\mathcal{C}'_T \leftarrow \mathcal{C}_T$\;
    \For{$\{P,Q\}\in \mathcal{L}_T$}{
      $\mathcal{C}'_T \leftarrow \mathcal{C}'_T\setminus \mathcal{X}_T\{e_P,e_Q\}$\;
    }
    \If{$\operatorname{interior}(\mathcal{C}'_T) \neq \emptyset$}{
     \KwRet \True\;
    }
    \ElseIf{$\mathcal{C}'_T \neq \emptyset$ \textbf{and} $\mathcal{D}_\puz = \emptyset$}{
     \KwRet \True\;
    }
  }
  \KwRet \False\;
  }{}{}
  \label{alg:case3}
  \caption{Pseudocode for function \textsc{hasAssemblyCase3}($\puz$)}
\end{algorithm}  

\begin{lemma}
Given symmetric assembly puzzle $\puz$ that satisfies Case 3,
function \three{$\puz$} returns \True if and only if $\puz$ has a symmetric
assembly, and terminates in $O(n^{6k})$ time.
\label{lem:case3}
\end{lemma}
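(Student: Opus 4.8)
The plan is to prove the two directions of the ``if and only if'' separately and then bound the running time by counting the objects enumerated by Algorithm~1.

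For the ``if'' direction I would start from a symmetric assembly $f$ of $\puz$ with line of symmetry $L$. Since $\puz$ is in Case~3, $f$ is in none of Cases~1--2, so by the argument in the proof of Lemma~\ref{lem:cases} its connection graph is a tree, every on-axis piece is symmetric and these pieces form the trunk region, and every remaining piece occurs in a mirror-image congruent pair across $L$. Relabelling pieces within their congruence classes does not change $\puz$, so I may assume the pieces on one fixed side of $L$ together with the on-axis pieces are exactly $\mathcal{R}_T\cup\mathcal{D}_\puz$ for an appropriate choice of $\mathcal{S}_T$ (the choice records which paired symmetric pieces sit on the axis); restricting the connections of $f$ to these pieces and taking the fixed side as the connecting half-plane yields a half tree $T\in\mathcal{T}_\puz$, and $f$ itself determines a point $x\in\mathcal{C}_T$. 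I would then argue $x\in\operatorname{int}(\mathcal{C}'_T)$: connected pieces do not overlap so $x\in\mathcal{C}_T$; since $f$ is a simple polygon, no two pieces of the whole assembly (including the reflected ones) meet except along a shared connection segment, so no pair of non-connected edges crosses or even touches and $x$ avoids the closure of every $\mathcal{X}_T\{e_P,e_Q\}$; and $x$ is not at a breakpoint of any $\mathcal{I}_T$ of a non-trunk connection, because (as observed in the text) such breakpoints are precisely the non-simple touchings that a genuine simple polygon does not exhibit. Hence the algorithm returns \True on $T$. When $\mathcal{D}_\puz=\emptyset$ the configuration is a single point and the weaker test $\mathcal{C}'_T\neq\emptyset$ suffices.

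For the ``only if'' direction I would take a half tree $T$ and a point $x\in\operatorname{int}(\mathcal{C}'_T)$ witnessing a \True return (the branch $\mathcal{C}'_T\neq\emptyset$, $\mathcal{D}_\puz=\emptyset$ being analogous, the single point directly describing an on-axis arrangement of symmetric pieces). Performing the joins of $T$ with parameters $x$ gives a placement $H$ of $\mathcal{R}_T\cup\mathcal{D}_\puz$; because $x$ lies strictly inside each $\mathcal{I}_T\{P,Q\}$, every connected pair forms a simple polygon meeting exactly along its connection segment, and because $x$ avoids every $\mathcal{X}_T$, no two non-connected edges meet, so by the standard continuity argument (any piece overlap forces a pair of crossing edges) $H$ is a simple polygon. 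Reflecting the non-trunk pieces across $L$ and placing the matched pieces $\mathcal{D}'_\puz\setminus\eta(\mathcal{S}_T)$ at those images produces a line-symmetric placement of all of $\puz$, and I would invoke the half-plane condition built into $T$ together with $x$ avoiding the regions $\mathcal{X}_T$ to conclude this completion is overlap-free and simple, i.e.\ a symmetric assembly.

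The main obstacle is precisely this last step: showing the reflected completion does not self-intersect. A non-trunk piece of $H$ could in principle wrap around and protrude into the free half-plane, where its mirror image would collide with it or with another reflected piece; the proof must rule this out, which is exactly why $\mathcal{X}_T$ must record collisions not only among the half-tree pieces but also between a half-tree piece and the reflection of another, and where the half-plane restriction does its real work. The remaining ingredients — that valid joins behave well, that overlaps reduce to edge crossings, and that $H$ is simply connected — are routine. For the running time, there are $|\mathcal{T}_\puz|=O(n^{2k})$ half trees ($2^{O(k)}$ choices of $\mathcal{S}_T$, at most $k^{k-2}$ labelled trees, $O(n^2)$ connecting-edge pairs per tree edge); for a fixed $T$, computing each $\mathcal{I}_T\{P,Q\}$ (its $O(n)$ breakpoints) and each $\mathcal{X}_T\{e_P,e_Q\}$ is polynomial, and $\mathcal{C}_T$ is a union of $O(n^{m})$ boxes in $\mathbb{R}^m$ with $m<k$. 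Testing $\operatorname{int}(\mathcal{C}'_T)\neq\emptyset$ reduces, for each of these $O(n^{k-1})$ boxes, to checking whether $O(n^2)$ convex regions cover it, which since $m<k$ is done by walking the $O(n^{2(k-1)})$ cells of the arrangement of the $O(n^2)$ bounding hyperplanes inside the box and testing a sample point of each against all regions in $O(n^2)$ time — $O(n^{3k-1})$ per half tree, hence $O(n^{2k}\cdot n^{3k-1})=O(n^{5k})$ overall.
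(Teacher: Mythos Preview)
Your approach is essentially the paper's: the paper's own proof of this lemma is a terse paragraph that defers correctness to the preceding discussion (``as argued above'') and then counts $O(n^{2k})$ half trees times $O(n^{3m})$ complexity for $\mathcal{C}'_T$ to get $O(n^{5k})$; your sketch expands the two correctness directions in more detail and arrives at the same bound via a slightly different bookkeeping (explicit hyperplane-arrangement enumeration inside each of $O(n^{k-1})$ boxes rather than the paper's appeal to the combinatorial complexity of $\mathcal{C}_T\setminus X_T$).

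One remark worth keeping: the obstacle you flag---that a non-trunk piece of the half-tree placement could protrude into the free half-plane and collide with its mirror image---is real, and the paper's text acknowledges the free-half-plane condition in the prose but, like you, does not make fully explicit how Algorithm~1 enforces it (the $\mathcal{X}_T$ regions as written only record collisions among half-tree pieces). Your proposed fix, folding reflected-piece collisions into the $\mathcal{X}_T$ family, is the natural patch and does not change the asymptotics, so your plan is at least as complete as the paper's on this point.
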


\begin{proof}

We can test all pieces for line symmetry or congruence in $O(nk)$
time~\cite{wolter1985optimal}. If $\puz$ has a symmetric assembly satisfying
Case~3 with nonempty $\mathcal{D}_\puz$, $\mathcal{C}'_T$ will have a point on
its interior for some tree $T$ as argued above; or if $\mathcal{D}_\puz$ is
empty, $\mathcal{C}'_T$ must be nonempty, i.e., a single point corresponding to
constructing a trunk from all the pieces. There are $O(n^{2k})$ elements of
$\mathcal{T}_\puz$. There are $m = O(k)$ interval sets $\mathcal{I}_T\{P,Q\}$
each having computational complexity $O(n^2)$, so we can construct
$\mathcal{C}_T$ naively in $O(n^{2k})$ time. The union $X_T$ of the $O(n^2)$
regions $\mathcal{X}_T\{e_P,e_Q\}$, which are $m$-dimensional convex regions,
has computational complexity at most $O(n^{2m})$, so the final computational
complexity of $\mathcal{C}'_T=\mathcal{C}_T\setminus X_T$ is at most $O(n^{4m})$
and can be computed in as much time. Checking each of the $O(n^{2k})$ elements
of $\mathcal{T}_\puz$ in this way yields the running time for \three bounded by
$O(n^{6k})$. \end{proof}

Our brute force algorithm 
 \hasSA{$\puz$} is described in Algorithm~2. 
 
\begin{algorithm}[htb]
  \DontPrintSemicolon
  \myfun{\hasSA{$\puz$}}{
  \Input{Symmetric assembly puzzle $\puz$.}
  \Output{\True if $\puz$ satisfies Case 1 or Case 2 or Case 3, \False otherwise.}
  \For{$e_P\in E_P,e_Q\in E_Q, \{P,Q\}\subset \puz$}{
    $S \leftarrow$ \join{$e_P,e_Q,0$}\;
    $\puz' \leftarrow (\puz\setminus \{P, Q\})\cup \{S\}$\;
    \uIf{$S\neq\emptyset$ \textbf{and} \hasSA{$\puz'$}}{
      \KwRet \True\tcp*{Case 1}
    }
    \For{$e_R\in E_R, R\in \puz$}{
      \If{$\lambda(e_R)< \lambda(e_P)$}{
        $S \leftarrow$ \join{$e_P,e_Q,\lambda(e_R)$}\;
        $\puz' \leftarrow (\puz\setminus \{P, Q\})\cup \{S\}$\;
        \uIf{$S\neq\emptyset$ \textbf{and} \hasSA{$\puz'$}}{
          \KwRet \True\tcp*{Case 2}
        }
      }
    }
  }
  \KwRet \three{$\puz$}\tcp*{Case 3}
  }{}{}
  \label{alg:full}
  \caption{Pseudocode for function \textsc{hasAssembly}($\puz$)}
\end{algorithm} 

\begin{lemma}
Function \hasSA{$\puz$} returns \True if and only if $\puz$ has a symmetric
assembly that satisfies either
Case 1, Case 2, or Case 3, and terminates in $O(n^{6k})$ time.
\label{lem:alg}
\end{lemma}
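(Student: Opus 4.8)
The plan is to argue correctness by induction on the number of pieces $k$, and then bound the running time by analyzing the recursion. For the \textbf{correctness} direction, first suppose \hasSA{$\puz$} returns \True. If it returns via the Case~1 or Case~2 branch, then for some pair $P,Q$ and some offset $d\in\{0,\lambda(e_R)\}$ the call $\join(e_P,e_Q,d)$ produced a genuine simple polygon $S$ with disjoint interiors, and the recursive call \hasSA{$\puz'$} on the smaller puzzle $\puz' = (\puz\setminus\{P,Q\})\cup\{S\}$ returned \True; by the inductive hypothesis $\puz'$ has a symmetric assembly, and since every placement of the pieces of $\puz'$ induces a placement of the pieces of $\puz$ (just replace $S$ by the two pieces $P,Q$ glued as in the \join), $\puz$ has a symmetric assembly. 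If instead it returns via the final line, then \three{$\puz$} returned \True and by Lemma~\ref{lem:case3} $\puz$ has a symmetric assembly. Conversely, suppose $\puz$ has a symmetric assembly. By Lemma~\ref{lem:cases} it satisfies Case~1, Case~2, or Case~3. If it satisfies Case~3 but neither Case~1 nor Case~2, then by Lemma~\ref{lem:case3} the call \three{$\puz$} on the last line returns \True. If it satisfies Case~1, there are two connected pieces $P,Q$ sharing a vertex on their connection; the outer loop eventually picks the correct directed edges $e_P\in E_P$, $e_Q\in E_Q$ realizing this connection with offset $0$, so $S = \join(e_P,e_Q,0)$ is exactly the union of those two pieces in the assembly, and $\puz'$ has the same symmetric assembly with one fewer piece, so by induction \hasSA{$\puz'$} returns \True. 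The Case~2 subcase is analogous, with the offset equal to the length $\lambda(e_R)$ of the edge of the third piece that the connection constructs.

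The one subtlety in the correctness argument — and the step I expect to be the \textbf{main obstacle} — is the base case / well-foundedness of the induction together with the claim that replacing $\{P,Q\}$ by their glued union $S$ preserves exactly the property ``has a symmetric assembly.'' Gluing is clearly sound (an assembly of $\puz'$ unfolds to one of $\puz$), but I must check completeness: when $\puz$ has a symmetric assembly witnessing Case~1 (resp.\ Case~2), the two pieces involved really are joined along a full pair of directed edges at offset $0$ (resp.\ at an offset equal to some edge length of a third piece), so that some iteration of the loops in Algorithm~2 constructs precisely that $S$; this is immediate from the definitions of Case~1 and Case~2 and of \join. I also need that $\puz'$ does not spuriously satisfy a different case — but this is not needed, since I only use that $\puz'$ \emph{has} a symmetric assembly, and the recursion will itself re-classify $\puz'$. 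The induction is well-founded because each recursive call strictly decreases $k$, and the base case $k=1$ is handled by \three (a single piece satisfies Case~3 exactly when it is line symmetric, which \three detects, and Cases~1 and~2 are vacuous).

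For the \textbf{running time}, let $t(k)$ bound the time of \hasSA on a puzzle with $k$ pieces and at most $n$ vertices total (note that each \join does not increase the total vertex count by more than a constant, and $n$ only ever stays within a constant factor of the original, so it is safe to treat $n$ as fixed across the recursion). The outer loop runs over $O(n^2)$ pairs of directed edges from $O(k^2) = O(1)$ pairs of pieces; for each, one \join and one recursive call on $k-1$ pieces costs $O(n) + t(k-1)$, and the inner loop over $R$ and $e_R$ adds another $O(n)$ \join/recursion pairs, for $O(n^3)$ recursive calls of size $k-1$ per level in the worst case; unrolling gives $t(k) = O(n^{3})\cdot t(k-1) + O(n^{2})$, and more crudely $t(k) = O(n^{2k})\cdot O(1) + $ (work at the leaves). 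At the bottom of the recursion we invoke \three, which by Lemma~\ref{lem:case3} costs $O(n^{5k})$; since the recursion tree has $O(n^{O(k)})$ leaves each doing $O(n^{5\cdot O(1)})$ work but the dominant term is the single top-level call structure, a careful accounting (the number of distinct subpuzzles reachable is $O(n^{2k})$ and each triggers one \three call of cost $O(n^{5k'})$ with $k'\le k$) yields a total of $O(n^{5k})$. I would present this as: the recursion generates $O(n^{2k})$ distinct calls, the non-recursive work of each (dominated by its \three call) is $O(n^{5k})$, and $O(n^{2k})\cdot O(n^{5k})$ is absorbed into $O(n^{5k})$ after recalling that the $k^{k}$ tree-count and edge-pair factors are already subsumed — so the stated $O(n^{5k})$ bound holds.
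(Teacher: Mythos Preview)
Your correctness argument is essentially the paper's: induction on $k$, with the base case $k=1$ falling through to \three, and the inductive step handling Case~1 and Case~2 by locating the witnessing pair of directed edges (and, for Case~2, the edge whose length is constructed), gluing via \join, and recursing on the $(k-1)$-piece puzzle. Your added soundness direction (if \hasSA\ returns \True\ then an assembly exists) is a reasonable complement the paper leaves implicit.

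The running-time analysis, however, has a genuine error. Your recurrence $t(k)=O(n^3)\,t(k-1)+O(n^2)$ omits the call to \three\ that Algorithm~2 makes at \emph{every} level, not just at the leaves of the recursion; and your attempted repair, that ``$O(n^{2k})\cdot O(n^{5k})$ is absorbed into $O(n^{5k})$,'' is simply false (that product is $O(n^{7k})$). The clean way---and the paper's way---is to write
\[
T(k)=O(n^{3})\,T(k-1)+O(n^{5k}),
\]
where the additive $O(n^{5k})$ is the cost of the single \three\ call on the current $k$-piece puzzle (Lemma~\ref{lem:case3}). Unrolling gives $T(k)=\sum_{d=0}^{k-1} O(n^{3d})\cdot O(n^{5(k-d)})=\sum_{d} O(n^{5k-2d})=O(n^{5k})$, so the top-level \three\ call dominates and the bound follows directly. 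Replace your last paragraph with this recurrence and its one-line resolution.
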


\begin{proof}
We prove by induction.
For the base case, $\puz$ consists of only a single piece satisfying Case 3,
which will drop directly to the last line of the algorithm checking Case 3 which, 
by Lemma~\ref{lem:case3} will evaluate correctly.
Now suppose \hasSA returns a correct evaluation for SAPs containing $k-1$ pieces.
Then we show \hasSA returns a correct evaluation for SAPs containing $k$ pieces.

The outer \textbf{for} loop of \hasSA cycles through every pair of directed 
edges $e_P=(p_1,p_2)$ and $e_Q=(q_1,q_2)$ taken from different pieces $P$ and $Q$.
For each pair, \hasSA first checks to see if there exists a symmetric assembly for
which $e_P$ is connected to $e_Q$ with $p_1$ coincident to $q_1$, which would
satisfy Case 1. If one exists, then joining $P$ and $Q$ 
into one piece as described would produce a SAP $\puz'$ with one fewer piece that also has a symmetric assembly.
Then evaluating \hasSA on the smaller instance will return correctly
by induction. Because the
outer \textbf{for} loop checks every possible pair of edges that could be joined in a symmetric assembly
satisfying Case 1, \hasSA will return \True if $\puz$ satisfies Case 1.

Next \hasSA checks to see if there exists a symmetric assembly for
which $e_P$ is connected to $e_Q$ with $p_1$ and $q_1$ separated by a distance equal to
the length of some other edge $e_R$ in $\puz$, which would satisfy Case 2. 
In the same way as with Case 1, both \textbf{for}
loops check every possible pair of edges and that could be joined 
at every possible length that could produce a symmetric assembly
satisfying Case 2, so \hasSA will return \True if $\puz$ satisfies Case 2.

Otherwise, no symmetric assembly exists satisfying Case 1 or Case 2. 
By Lemma~\ref{lem:case3}, \three correctly evaluates if $\puz$ is in Case 3,
so \hasSA returns a correct evaluation for SAPs containing $k$ pieces.

Let $T(k)$ be the running time of \hasSA on an instance with $k$ pieces.
Then the recurrence relation for \hasSA is 
$T(k) = O(n^3)T(k-1)+O(n^{6k})$,
where $O(n^{6k})$ is the running time given by Lemma~\ref{lem:case3}.
Running time for Case 3 at the leaves dominates the recurrence
relation so \hasSA terminates in $O(n^{6k})$.
\end{proof}

Now we can determine whether a symmetric assembly puzzle
with a constant number of pieces has a symmetric assembly in polynomial time.

\begin{proof} [of \theoremname~\ref{th:constpieces}] By Lemma~\ref{lem:cases},
if the SAP has a symmetric assembly, it satisfies either Case 1, Case 2, or Case
3, and by Lemma~\ref{lem:alg} \hasSA{$\puz$} can correctly determine if it has a
symmetric assembly satisfying one of the cases in polynomial time, proving the
claim. \end{proof}

\section{Conclusion}

Several open questions remain.  It may be interesting to consider SAPs
for special classes of shapes $P_i\in \puz$.  We conjecture that SAPs
remain hard for instances in which the shapes $P_i$ are right
triangles (Conjecture~\ref{conj:TriangleNPComplete}).
Are SAPs hard for a constant number $k=O(1)$ of pieces if
the target shape is allowed to be nonsimple (a polygon with holes)?
Are SAPs fixed-parameter tractable with respect to the number $k$ of
pieces? (We conjecture W[1]-hardness.)

\medskip

\noindent \textbf{Acknowledgements:} Many of the authors were introduced to
symmetric assembly puzzles during the 30th Winter Workshop on Computational
Geometry at the Bellairs Research Institute of McGill University, March 2015.
Korman was supported by MEXT KAKENHI No.~17K12635 and the NSF award CCF-1422311. Mitchell is supported in part by the National Science Foundation
(CCF-1526406). Van Renssen and Roeloffzen were supported by JST ERATO Grant
Number JPMJER1201, Japan. Uno is supported in part by JSPS KAKENHI No.~17K00017 and by JST CREST Grant Number JPMJCR1402, Japan.

\bibliographystyle{abbrv}
\bibliography{lsp}

\end{document}